\documentclass[pdflatex,sn-mathphys-ay]{sn-jnl}

\usepackage{graphicx}%
\usepackage{multirow}%
\usepackage{amsmath,amssymb,amsfonts}%
\usepackage{amsthm}%
\usepackage{mathrsfs}%
\usepackage[title]{appendix}%
\usepackage{xcolor}%
\usepackage{textcomp}%
\usepackage{manyfoot}%
\usepackage{booktabs}%
\usepackage{makecell}%
\usepackage{algorithm}%
\usepackage{listings}%
\usepackage{algorithmic}%
\usepackage{xspace}%
\usepackage{enumitem}

\usepackage{caption}
\usepackage{subcaption}

\theoremstyle{thmstyleone}%
\newtheorem{theorem}{Theorem}
\newtheorem{proposition}[theorem]{Proposition}%
\newtheorem{lemma}[theorem]{Lemma}
\newtheorem{corollary}[theorem]{Corollary}

\theoremstyle{thmstyletwo}%

\theoremstyle{thmstylethree}%

\DeclareMathOperator*{\argmin}{arg\,min}
\newcommand{\R}{\mathbb R}
\newcommand{\calP}{\mathcal P}
\newcommand{\calM}{\mathcal M}
\newcommand{\calB}{\mathcal B}
\newcommand{\calA}{\mathcal A}
\newcommand{\eps}{\varepsilon}
\newcommand{\barmu}{\bar\mu}
\newcommand{\Med}{\operatorname{Med}}
\newcommand{\dist}{\operatorname{dist}}
\newcommand{\conv}{\operatorname{conv}}
\newcommand{\norm}[1]{\left\lVert #1\right\rVert}

\newcommand{\PLoja}{Polyak--\L{}ojasiewicz\xspace}
\newcommand{\KLoja}{Kurdyka--\L{}ojasiewicz\xspace}
\begin{document}

\title[Fast Computation of Free-Support Wasserstein Medians]{Fast Computation of Free-Support Wasserstein Medians}



\author[1,2]{\fnm{Kisung} \sur{You}}
\author[3,4]{\fnm{Mauro} \sur{Giuffr\`{e}} }
\author[5]{\fnm{Dennis} \sur{Shung}}


\affil*[1]{\orgdiv{Department of Mathematics}, \orgname{Baruch College}, \orgaddress{\street{55 Lexington Street}, \city{New York}, \postcode{10010}, \state{NY}, \country{USA}}}

\affil*[2]{\orgdiv{Department of Mathematics}, \orgname{The Graduate Center, City University of New York}, \orgaddress{\street{365 5th Ave}, \city{New York}, \postcode{10016}, \state{NY}, \country{USA}}}

\affil[3]{\orgdiv{Department of Biomedical Informatics and Data Science}, \orgname{Yale School of Medicine}, \orgaddress{\street{101 College Street}, \city{New Haven}, \postcode{06510}, \state{CT}, \country{USA}}}

\affil[4]{\orgdiv{Department of Medicine, Surgery, and Health Sciences}, \orgname{University of Trieste}, \orgaddress{\street{Strada di Fiume 447}, \city{Trieste}, \postcode{34149}, \state{TS}, \country{Italy}}}

\affil[5]{\orgdiv{Division of Gastroenterology and Hepatology, Department of Medicine}, \orgname{Mayo Clinic}, \orgaddress{\street{200 First St SW}, \city{Rochester}, \postcode{55905}, \state{MN}, \country{USA}}}


\abstract{
The Wasserstein median is a robust alternative to the Wasserstein barycenter for averaging probability measures, but exact empirical computation can be expensive. A natural metric-space Weiszfeld scheme updates the current candidate by solving a weighted Wasserstein barycenter problem at each outer iteration, producing a nested optimization problem. We propose a direct fixed-weight free-support solver that avoids this inner barycenter loop. At each iteration, the method solves exact optimal transport (OT) subproblems from the current candidate to the input measures, computes barycentric projections of the selected plans, and relocates each support atom to an inverse-distance-weighted average of its projected destinations. For a smoothed median objective, we show that this relocation is the exact minimizer of a tight majorization--minimization surrogate. This yields monotone descent for exact transport subproblems, convex-hull invariance, a finite-time best-residual rate, residual-to-gradient control under differentiability, and fixed-point and stationarity characterizations. We also give smoothing, stability, and resolution-consistency results clarifying the fixed-weight approximation. In exact-OT benchmarks, the direct solver attains median objectives close to tightly solved nested Weiszfeld baselines while using substantially fewer exact transport subproblems. Additional contamination, posterior aggregation, and image-prototype experiments show that the direct solver produces median summaries comparable to nested computation and less sensitive to outlying distributions than Wasserstein barycenters.
}

\keywords{Optimal transport, Wasserstein median, Weiszfeld algorithm, Majorization--minimization, Robust aggregation}



\maketitle

\section{Introduction}

The Wasserstein barycenter is a widely used tool for averaging probability measures in the literature of optimal transport (OT). It is the Fr\'{e}chet mean under the Wasserstein distance and minimizes a sum of squared Wasserstein distances \citep{agueh_2011_BarycentersWassersteinSpace}. Like the Euclidean mean, it is geometrically natural but mean-like. For instance, when some input distributions are outlying, a barycenter can be pulled toward them.

The Wasserstein median, as a robust alternative to the barycenter, replaces the squared-distance objective by a distance objective. For probability measures \(\mu_1,\ldots,\mu_N\) and positive weights \(\pi_n\) summing to one, the median objective is
\begin{equation}\label{eq:intro-median}
    \mu \mapsto \sum_{n=1}^N \pi_n W_2(\mu,\mu_n).
\end{equation}
\citet{you_2025_WassersteinMedianProbability} established existence, consistency, robustness properties, and a general computational meta-algorithm for the Wasserstein median based on iteratively solving Wasserstein barycenter problems. This approach can be cast as a standard Weiszfeld algorithm for computing the geometric median \citep{weiszfeld_1937_PointPourLequel}.

The computational bottleneck is easy to see from the metric-space Weiszfeld viewpoint. Starting from a candidate \(\mu^{(k)}\), a Weiszfeld iteration forms inverse-distance weights
\[
    \lambda_n^{(k)} \propto \frac{\pi_n}{W_2(\mu^{(k)},\mu_n)}
\]
and then updates \(\mu^{(k+1)}\) by solving a weighted Wasserstein barycenter problem,
\[
    \mu^{(k+1)} \in \argmin_{\mu}\sum_{n=1}^N \lambda_n^{(k)} W_2^2(\mu,\mu_n).
\]
For empirical measures whose support is not constrained, this barycenter problem is itself an iterative optimization problem. Thus an outer median iteration contains an inner barycenter computation. If the inner barycenter routine uses \(L_k\) iterations at outer step \(k\), the nested method requires roughly \(N L_k\) exact OT subproblems during that outer step.

We propose a direct fixed-weight free-support solver that avoids this inner loop. At the current candidate support, the method solves the same \(N\) exact OT subproblems needed to evaluate the Weiszfeld distances, computes barycentric projections of the selected plans, and relocates each atom to an inverse-distance-weighted average of its projected destinations. Operationally, the step resembles one warm-started inner barycenter iteration. Its justification is different: freezing the current OT plans gives a feasibility-invariant quadratic upper bound on each squared transport cost, and linearizing the smoothed square-root distance gives Weiszfeld weights. The resulting combined surrogate is a majorization--minimization (MM) surrogate for the median objective itself, and its closed-form minimizer is precisely the proposed relocation.

The construction combines established ideas in a new free-support median setting. Weiszfeld's method and its singularity issues go back to \citet{weiszfeld_1937_PointPourLequel}; \citet{kuhn_1973_NoteFermatsProblem} and \citet{beck_2015_WeiszfeldsMethodOld} analyze classical anchor-point and smoothing issues. Barycentric-projection fixed-point ideas for Wasserstein barycenters appear in \citet{alvarez-esteban_2016_FixedpointApproachBarycenters} and \citet{cuturi_2014_FastComputationWasserstein}. Our contribution is to combine plan-freezing and smoothed Weiszfeld reweighting to obtain a direct exact-OT solver for fixed-weight free-support Wasserstein medians.

The method is intentionally restricted to a fixed-weight free-support class. A candidate median has the form
\[
    \barmu_Z=\sum_{i=1}^m v_i\delta_{z_i},\qquad v_i>0,
\]
where the weights \(v_i\) are chosen in advance, usually uniformly, and only the locations \(z_i\) are optimized. This is not the full free-support median over both locations and weights at a fixed \(m\). Adaptive support weights are an important extension, but they are outside the scope of this paper. The resolution result later in the paper explains how increasingly rich fixed-weight approximation classes can recover the population median set in the limit \(m\to\infty\).

For \(\eps>0\), we prove that the full relocation step is the exact minimizer of a tight frozen-plan quadratic surrogate. This gives monotone descent for any selected exact optimal plans, without uniqueness of the transport plans. The iterates enter the convex hull of the input supports after one step, which yields residual convergence and an \(O(1/K)\) best-residual rate. Under differentiability of the discrete OT value functions, the residual controls the gradient, and accumulation points become stationary under a standard Danskin-type condition. We also give smoothing, stability, and resolution-consistency results to relate the smoothed fixed-weight discrete problem to the nonsmoothed Wasserstein median objective.

The computational claims of the paper are deliberately exact OT claims. We do not compare with entropic or Sinkhorn methods in the main design. The central comparison is against exact nested Wasserstein--Weiszfeld baselines: a tight nested solve, and nested variants with fixed inner barycenter budgets. The experiments are designed to report wall-clock time, exact OT subproblem counts, objective gaps, residual histories, and scaling with the number of measures, support size, and dimension.

The rest of the paper is organized as follows. Section~\ref{sec:prelim} reviews the Wasserstein distance, empirical OT as a linear program, barycentric projections, and the distinction between Wasserstein barycenters and medians. Section~\ref{sec:formulation} formulates the fixed-weight free-support median problem and the exact nested Wasserstein--Weiszfeld baseline. Section~\ref{sec:algorithm} presents the direct Wasserstein--Weiszfeld particle solver. Section~\ref{sec:theory} gives the MM analysis, residual guarantees, stationarity statements, smoothing bounds, objective stability, and resolution consistency. Section~\ref{sec:implementation} discusses exact OT cost accounting and implementation diagnostics. Section~\ref{sec:experiments} reports the computational experiments, and Section~\ref{sec:discussion} concludes with limitations and future directions.

\section{Preliminaries}\label{sec:prelim}

This section reviews the notation and finite-dimensional OT facts used by the solver. Throughout the paper, \(\norm{\cdot}\) denotes the Euclidean norm on \(\R^d\).

\subsection{The 2-Wasserstein distance}

Let \(\calP_2(\R^d)\) denote the set of Borel probability measures on \(\R^d\) with finite second moment. For \(\mu,\nu\in\calP_2(\R^d)\), the squared 2-Wasserstein distance is
\begin{equation}\label{eq:w2}
    W_2^2(\mu,\nu)
    =
    \inf_{\gamma\in\Pi(\mu,\nu)}
    \int_{\R^d\times\R^d}\norm{x-y}^2\,d\gamma(x,y),
\end{equation}
where \(\Pi(\mu,\nu)\) is the set of couplings with marginals \(\mu\) and \(\nu\). The metric \(W_2\) turns \(\calP_2(\R^d)\) into a natural geometry for comparing probability measures. For the rest of this paper, the 2-Wasserstein distance will be denoted simply the Wasserstein distance. See  \citet{villani_2009_OptimalTransportOld} and \citet{ambrosio_2021_LecturesOptimalTransport} for background.

\subsection{Empirical optimal transport as a linear program}

The algorithms in this paper work with empirical measures. Suppose
\[
    \mu=\sum_{i=1}^m a_i\delta_{z_i},
    \qquad
    \nu=\sum_{j=1}^{m'} b_j\delta_{x_j},
\]
where \(a_i,b_j>0\) and the weights sum to one. Then \(W_2^2(\mu,
\nu)\) is the finite linear program
\begin{equation}\label{eq:empirical-lp}
    W_2^2(\mu,\nu)
    =
    \min_{\Gamma\in\Pi(a,b)}
    \sum_{i=1}^{m}\sum_{j=1}^{m'}
    \Gamma_{ij}\norm{z_i-x_j}^2,
\end{equation}
where
\[
    \Pi(a,b)=\{\Gamma\in\R_+^{m\times m'}:\Gamma\mathbf 1_{m'}=a,
    \Gamma^\top\mathbf 1_m=b\}.
\]
The transport polytope \(\Pi(a,b)\) depends only on the weights, not on the support locations. Thus if a plan is optimal for one support configuration, it remains feasible after the source support moves. This simple feasibility-invariance observation is the key to the MM construction below.

\subsection{Barycentric projections}

Given a plan \(\Gamma\in\Pi(a,b)\), the barycentric projection from source atom \(z_i\) is
\[
    B(z_i)=\frac{1}{a_i}\sum_{j=1}^{m'}\Gamma_{ij}x_j.
\]
This is a conditional mean of target locations under the selected plan. While it is not strictly a ``map'' in the sense of \citet{monge_1781_MemoireTheorieDeblais}, barycentric projections nevertheless provide a practical way to turn an optimal plan obtained from \eqref{eq:empirical-lp} into a destination for each free-support atom. They are central to free-support barycenter algorithms \citep{cuturi_2014_FastComputationWasserstein, alvarez-esteban_2016_FixedpointApproachBarycenters} and to the median solver proposed here.

\subsection{Wasserstein barycenters and medians}

The Wasserstein barycenter minimizes
\[
    \calB(\mu)=\sum_{n=1}^N \pi_n W_2^2(\mu,\mu_n),
\]
whereas the Wasserstein median minimizes
\begin{equation}\label{eq:population-median}
    \calM(\mu)=\sum_{n=1}^N \pi_n W_2(\mu,\mu_n).
\end{equation}
The barycenter is mean-like, while the median is median-like. This distinction mirrors the Euclidean difference between least squares and least absolute deviations. The Wasserstein median functional, its robustness, and a nested computational pipeline were studied by \citet{you_2025_WassersteinMedianProbability}. The present paper assumes this target and focuses on accelerating exact empirical computation.

If \(K\subset\R^d\) is compact, we write \(\calP(K)\) for probability measures supported in \(K\). Since \(K\) is compact, all such measures have finite second moment, so \(\calP(K)\subset\calP_2(\R^d)\), and the topology induced by \(W_2\) agrees with weak convergence on \(\calP(K)\) \citep{villani_2003_TopicsOptimalTransportation, panaretos_2020_InvitationStatisticsWasserstein}.


\section{Fixed-weight formulation and exact nested baseline}\label{sec:formulation}

\subsection{The fixed-weight free-support class}

The inputs are empirical measures
\[
    \mu_n=\sum_{j=1}^{m_n}w_{n,j}\delta_{x_{n,j}},
    \qquad n=1,\ldots,N.
\]
We approximate the median by a measure with fixed positive weights
\begin{equation}\label{eq:candidate}
    \barmu_Z=\sum_{i=1}^m v_i\delta_{z_i},
    \qquad v_i>0,
    \qquad \sum_{i=1}^m v_i=1.
\end{equation}
Only the locations \(Z=(z_1,\ldots,z_m)\in(\R^d)^m\) are optimized. The fixed-weight nonsmoothed free-support median objective is
\begin{equation}\label{eq:discrete-median}
    \calM_m(Z)=\sum_{n=1}^N \pi_n W_2(\barmu_Z,\mu_n).
\end{equation}
When no current distance vanishes, the exact Weiszfeld weights for \(\calM_m\) are well defined. To obtain a globally finite and differentiable distance surrogate, we use the smoothed objective
\begin{equation}\label{eq:smoothed-objective}
    \calM_{m,\eps}(Z)
    =
    \sum_{n=1}^N\pi_n
    \sqrt{W_2^2(\barmu_Z,\mu_n)+\eps^2},
    \qquad \eps>0.
\end{equation}
The smoothed objective is the object analyzed in the main theorems. The exact case \(\eps=0\) will be discussed later. 

\subsection{Exact nested Wasserstein--Weiszfeld}

At a current support \(Z^{(k)}\), define
\[
    d_n^{(k)}=W_2(\barmu_{Z^{(k)}},\mu_n),
    \qquad
    r_{n,\eps}^{(k)}=\sqrt{(d_n^{(k)})^2+\eps^2}.
\]
The smoothed Weiszfeld weights are
\begin{equation}\label{eq:lambda}
    \lambda_{n,\eps}^{(k)}
    =
    \frac{\pi_n/r_{n,\eps}^{(k)}}
    {\sum_{\ell=1}^N\pi_\ell/r_{\ell,\eps}^{(k)}}.
\end{equation}
A Weiszfeld iteration then updates by solving the weighted Wasserstein barycenter problem
\begin{equation}\label{eq:nested-barycenter}
    Z^{(k+1)}\approx
    \argmin_Z
    \sum_{n=1}^N \lambda_{n,\eps}^{(k)} W_2^2(\barmu_Z,\mu_n).
\end{equation}
This is the exact nested baseline. It corresponds to the computational pipeline in \citet{you_2025_WassersteinMedianProbability} when the barycenter subroutine is an exact OT free-support solver. If the inner solver uses \(L_k\) barycenter iterations at outer iteration \(k\), then the outer step requires approximately \(N L_k\) exact OT subproblems.

\section{Direct Wasserstein--Weiszfeld particle solver}\label{sec:algorithm}

The proposed solver uses the same distances and weights as the nested method but replaces the inner barycenter solve by a closed-form relocation step. At iteration \(k\), for each \(n\), choose an exact optimal plan
\[
    \Gamma^{(n,k)}\in\Pi(v,w_n)
\]
from \(\barmu_{Z^{(k)}}\) to \(\mu_n\). Define the barycentric projection
\begin{equation}\label{eq:baryproj}
    B_n^{(k)}(z_i^{(k)})
    =
    \frac{1}{v_i}\sum_{j=1}^{m_n}\Gamma_{ij}^{(n,k)}x_{n,j}.
\end{equation}
The denominator is well defined because the support weights \(v_i\) are fixed and strictly positive. The proposed destination for atom \(i\) is
\begin{equation}\label{eq:destination}
    M_i^{(k)}=
    \sum_{n=1}^N\lambda_{n,\eps}^{(k)}B_n^{(k)}(z_i^{(k)}),
\end{equation}
and the full update is
\begin{equation}\label{eq:full-update}
    z_i^{(k+1)}=M_i^{(k)},
    \qquad i=1,\ldots,m.
\end{equation}
The main algorithm therefore has no damping or line-search parameter. Proposition~\ref{prop:minimizer} shows that it is the exact minimizer of the MM surrogate, not a heuristic relaxation.

\begin{algorithm}[ht]
\caption{Direct fixed-weight free-support Wasserstein--Weiszfeld solver}\label{alg:direct}
\begin{algorithmic}[1]
\REQUIRE Empirical measures \(\mu_n=\sum_j w_{n,j}\delta_{x_{n,j}}\); outer weights \(\pi\); fixed support weights \(v\); support size \(m\); smoothing \(\eps\ge0\); tolerance.
\ENSURE Support locations \(Z\) defining \(\barmu_Z=\sum_i v_i\delta_{z_i}\).
\STATE Initialize \(Z^{(0)}\), for example by \(k\)-means on the pooled support points.
\FOR{\(k=0,1,2,\ldots\)}
    \FOR{\(n=1,\ldots,N\)}
        \STATE Solve exact OT between \(\barmu_{Z^{(k)}}\) and \(\mu_n\), obtaining \(\Gamma^{(n,k)}\).
        \STATE Compute \(d_n^{(k)}=W_2(\barmu_{Z^{(k)}},\mu_n)\) and \(r_{n,\eps}^{(k)}=\sqrt{(d_n^{(k)})^2+\eps^2}\).
        \STATE Compute barycentric projections \(B_n^{(k)}(z_i^{(k)})\) by \eqref{eq:baryproj}.
    \ENDFOR
    \STATE Compute Weiszfeld weights \(\lambda_{n,\eps}^{(k)}\) by \eqref{eq:lambda}.
    \STATE Set \(z_i^{(k+1)}=\sum_n\lambda_{n,\eps}^{(k)}B_n^{(k)}(z_i^{(k)})\) for all \(i\).
    \STATE Stop if the objective or residual change is below tolerance.
\ENDFOR
\end{algorithmic}
\end{algorithm}

When \(\eps=0\), Algorithm~\ref{alg:direct} is the nonsmoothed Wasserstein--Weiszfeld iteration whenever all distances \(d_n^{(k)}\) are positive. If a current candidate coincides with an input measure, the inverse-distance weights become singular, as in the classical Euclidean Weiszfeld problem. For this reason, the implemented default is \(\eps>0\). In the experiments, nonsmoothed runs use a deterministic floor rule whenever a distance falls below a fixed numerical threshold.

\section{Majorization--minimization analysis}\label{sec:theory}

This section gives the solver guarantees. The results are intentionally stated for \(\eps>0\). This is not because the nonsmoothed iteration is uninteresting, but because smoothing removes singular inverse-distance weights and makes all surrogate slopes finite.

For a selected optimal plan \(\Gamma^{(n,k)}\), define the frozen-plan quadratic cost
\begin{equation}\label{eq:Qnk}
    Q_{n,k}(Z)=
    \sum_{i=1}^m\sum_{j=1}^{m_n}
    \Gamma_{ij}^{(n,k)}\norm{z_i-x_{n,j}}^2.
\end{equation}
At the current iterate set
\begin{equation}\label{eq:cnk}
    c_n^{(k)}=Q_{n,k}(Z^{(k)})=W_2^2(\barmu_{Z^{(k)}},\mu_n).
\end{equation}
We use \(c_n^{(k)}\) for this scalar to avoid overloading the support-size notation.

\subsection{The surrogate and its minimizer}

The first proposition is the reason the method is an MM algorithm. The transport plan optimal at \(Z^{(k)}\) is still feasible at any other \(Z\), and the square root in the smoothed median objective has a supporting tangent.

\begin{proposition}[Smoothed frozen-plan majorization]\label{prop:majorization}
Fix \(\eps>0\). At iteration \(k\), let \(\Gamma^{(n,k)}\) be any selected exact optimal plans. Define
\begin{equation}\label{eq:surrogate}
    S_{k,\eps}(Z)=
    \sum_{n=1}^N \pi_n
    \left[
    r_{n,\eps}^{(k)}+
    \frac{Q_{n,k}(Z)-c_n^{(k)}}{2r_{n,\eps}^{(k)}}
    \right].
\end{equation}
Then
\[
    \calM_{m,\eps}(Z)\le S_{k,\eps}(Z)
    \quad\text{for all }Z,
    \qquad
    \calM_{m,\eps}(Z^{(k)})=S_{k,\eps}(Z^{(k)}).
\]
No uniqueness of the exact OT plans is required.
\end{proposition}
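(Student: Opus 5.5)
The plan is to establish two elementary inequalities, one per term of the sum, and combine them termwise using the positive weights \(\pi_n\).

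\textbf{Step 1 (frozen-plan upper bound).} For each \(n\), I will show \(W_2^2(\barmu_Z,\mu_n)\le Q_{n,k}(Z)\) for every \(Z\). The transport polytope \(\Pi(v,w_n)\) depends only on the fixed weights \(v\) and \(w_n\), not on the locations \(Z\). So the selected plan \(\Gamma^{(n,k)}\), optimal at \(Z^{(k)}\), is still feasible in the linear program \eqref{eq:empirical-lp} for \(\barmu_Z\) and \(\mu_n\). Its cost \(Q_{n,k}(Z)\) therefore bounds the minimum from above. Equality holds at \(Z=Z^{(k)}\) by the definition \eqref{eq:cnk}, since \(\Gamma^{(n,k)}\) is optimal there. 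This uses only that \(\Gamma^{(n,k)}\) is \emph{some} optimal plan, so no uniqueness is needed.

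\textbf{Step 2 (tangent bound for the square root).} For \(r>0\), the function \(t\mapsto\sqrt{t+\eps^2}\) on \(t\ge0\) is concave. It therefore lies below its tangent at \(t_0=c_n^{(k)}\):
\[
\sqrt{t+\eps^2}\le r_{n,\eps}^{(k)}+\frac{t-c_n^{(k)}}{2r_{n,\eps}^{(k)}},
\]
with equality at \(t=t_0\). Here \(r_{n,\eps}^{(k)}=\sqrt{c_n^{(k)}+\eps^2}>0\) because \(\eps>0\). The inequality can also be checked directly: it is equivalent to \(\bigl(\sqrt{t+\eps^2}-r_{n,\eps}^{(k)}\bigr)^2\ge0\) after multiplying through by \(2r_{n,\eps}^{(k)}\).

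\textbf{Step 3 (chaining and summing).} The square root is nondecreasing in \(t\), so Step 1 gives
\[
\sqrt{W_2^2(\barmu_Z,\mu_n)+\eps^2}\le\sqrt{Q_{n,k}(Z)+\eps^2}.
\]
Applying Step 2 with \(t=Q_{n,k}(Z)\ge0\) bounds the right-hand side by the \(n\)-th bracket in \eqref{eq:surrogate}. Multiplying by \(\pi_n>0\) and summing over \(n\) yields \(\calM_{m,\eps}(Z)\le S_{k,\eps}(Z)\). At \(Z=Z^{(k)}\), both Step 1 and Step 2 hold with equality, so each bracket reduces to \(r_{n,\eps}^{(k)}\) and the sum is \(\calM_{m,\eps}(Z^{(k)})\).

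No step is a real obstacle. The only point needing care is the order of the two bounds. The monotonicity of the square root is what allows the transport-cost upper bound to be inserted inside it before the tangent bound is applied. The role of \(\eps>0\) is only to keep \(r_{n,\eps}^{(k)}>0\), so that the tangent slope is finite.
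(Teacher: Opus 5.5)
Your proposal is correct and follows the paper's proof essentially step for step: the feasibility-invariance bound \(W_2^2(\barmu_Z,\mu_n)\le Q_{n,k}(Z)\), then monotonicity plus the concave tangent bound for \(s\mapsto\sqrt{s+\eps^2}\) at \(c_n^{(k)}\), then summing with weights \(\pi_n\), with equality at \(Z^{(k)}\) because both bounds are tight there. Your direct verification via \(\bigl(\sqrt{t+\eps^2}-r_{n,\eps}^{(k)}\bigr)^2\ge0\) is a small extra check that the paper omits.
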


The next proposition explains why the update has no step size. The quadratic surrogate in Proposition~\ref{prop:majorization} has a closed-form minimizer.

\begin{proposition}[Closed-form surrogate minimizer]\label{prop:minimizer}
Let
\begin{equation}\label{eq:Ak}
    A_k=\frac12\sum_{n=1}^N\frac{\pi_n}{r_{n,\eps}^{(k)}}.
\end{equation}
Up to a constant independent of \(Z\), the surrogate satisfies
\begin{equation}\label{eq:square}
    S_{k,\eps}(Z)=C_k+A_k\sum_{i=1}^m v_i\norm{z_i-M_i^{(k)}}^2,
\end{equation}
where \(M_i^{(k)}\) is defined in \eqref{eq:destination}. Hence the unique minimizer of \(S_{k,\eps}\) is \(z_i=M_i^{(k)}\), \(i=1,\ldots,m\).
\end{proposition}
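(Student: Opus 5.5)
The proof is a direct completion of the square in each atom separately. The only $Z$-dependent part of $S_{k,\eps}$ is
\[
\sum_n \frac{\pi_n}{2r_{n,\eps}^{(k)}}\,Q_{n,k}(Z).
\]
The terms $r_{n,\eps}^{(k)}$ and $-c_n^{(k)}/(2r_{n,\eps}^{(k)})$ are constants and go into $C_k$.

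The first step is to rewrite each frozen-plan cost $Q_{n,k}$ atom by atom. Feasibility of $\Gamma^{(n,k)}\in\Pi(v,w_n)$ gives the row sums $\sum_j\Gamma^{(n,k)}_{ij}=v_i$. Expanding $\norm{z_i-x_{n,j}}^2$ and using \eqref{eq:baryproj}, the bias--variance identity gives
\[
\sum_j \Gamma^{(n,k)}_{ij}\norm{z_i-x_{n,j}}^2 = v_i\norm{z_i-B_n^{(k)}(z_i^{(k)})}^2 + \sum_j\Gamma^{(n,k)}_{ij}\norm{B_n^{(k)}(z_i^{(k)})-x_{n,j}}^2 .
\]
The second term does not depend on $Z$, since the plan and the projections are frozen at iteration $k$. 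Hence
\[
Q_{n,k}(Z)=\sum_i v_i\norm{z_i-B_{n,i}}^2+\text{const},
\]
where $B_{n,i}:=B_n^{(k)}(z_i^{(k)})$.

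The second step aggregates over $n$. Write $\alpha_n=\pi_n/(2r_{n,\eps}^{(k)})$. By \eqref{eq:Ak}, $\sum_n\alpha_n=A_k$, and by \eqref{eq:lambda}, $\alpha_n/A_k=\lambda_{n,\eps}^{(k)}$. For each $i$, apply the weighted version of the same identity:
\[
\sum_n \alpha_n\norm{z_i-B_{n,i}}^2 = A_k\norm{z_i-\textstyle\sum_n\lambda_{n,\eps}^{(k)}B_{n,i}}^2 + \sum_n\alpha_n\norm{B_{n,i}-M_i^{(k)}}^2 .
\]
The weighted average $\sum_n\lambda_{n,\eps}^{(k)}B_{n,i}$ is exactly $M_i^{(k)}$ from \eqref{eq:destination}. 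Multiplying by $v_i$, summing over $i$, and absorbing every $Z$-independent term into $C_k$ yields \eqref{eq:square}.

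Uniqueness of the minimizer then follows because $A_k>0$ (as $\eps>0$, each $r_{n,\eps}^{(k)}\ge\eps$ is finite and positive, and $\pi_n>0$) and every $v_i>0$. So $S_{k,\eps}$ is a strictly convex quadratic in $Z$ with minimum value $C_k$, attained only at $z_i=M_i^{(k)}$ for all $i$.

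There is no real obstacle. The one point needing care is that the row-sum identity relies on $\Gamma^{(n,k)}$ having first marginal $v$, which holds for any feasible plan whether or not it is optimal. No property of the optimal plan beyond feasibility is used, so the argument does not depend on uniqueness of the plans.
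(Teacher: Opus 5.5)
Your proof is correct and is essentially the paper's argument. Both isolate $\sum_n \frac{\pi_n}{2r_{n,\eps}^{(k)}}Q_{n,k}(Z)$, use the row sums $\sum_j\Gamma^{(n,k)}_{ij}=v_i$ and the identity $\lambda_{n,\eps}^{(k)}=(\pi_n/(2r_{n,\eps}^{(k)}))/A_k$, and complete the square. The only difference is bookkeeping: you complete the square in two bias--variance stages (first around each $B_n^{(k)}(z_i^{(k)})$, then around their $\lambda$-weighted average), whereas the paper expands and reads off the linear coefficient $A_k v_i M_i^{(k)}$ in one step.
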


\subsection{Descent and residual rate}

The following theorem is the basic solver guarantee. It says that every exact plan selection gives valid descent. Degeneracy of the exact OT LP may change the trajectory, but it does not invalidate the descent inequality.

\begin{theorem}[Monotone descent]\label{thm:descent}
Let \(\eps>0\), and suppose the update \eqref{eq:full-update} is computed using exact OT plans. Then
\begin{equation}\label{eq:descent}
    \calM_{m,\eps}(Z^{(k+1)})
    \le
    \calM_{m,\eps}(Z^{(k)})
    -
    A_k
    \sum_{i=1}^m v_i\norm{z_i^{(k)}-M_i^{(k)}}^2.
\end{equation}
Consequently, \(\{\calM_{m,\eps}(Z^{(k)})\}\) is non-increasing and converges.
\end{theorem}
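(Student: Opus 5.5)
The plan is to chain Propositions~\ref{prop:majorization} and~\ref{prop:minimizer} in the standard majorization--minimization way, and to exploit the exact square form \eqref{eq:square} to quantify the decrease. Write \(Z^{(k+1)}=(M_1^{(k)},\ldots,M_m^{(k)})\) for the full update \eqref{eq:full-update}.

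First, apply the majorization inequality of Proposition~\ref{prop:majorization} at the point \(Z^{(k+1)}\). This gives
\[
\calM_{m,\eps}(Z^{(k+1)})\le S_{k,\eps}(Z^{(k+1)}).
\]
Next, evaluate \eqref{eq:square} at both \(Z^{(k+1)}\) and \(Z^{(k)}\). At \(Z^{(k+1)}\) the quadratic term vanishes, so \(S_{k,\eps}(Z^{(k+1)})=C_k\). At \(Z^{(k)}\) we get
\[
S_{k,\eps}(Z^{(k)})=C_k+A_k\sum_i v_i\norm{z_i^{(k)}-M_i^{(k)}}^2 .
\]
Subtracting the two evaluations yields
\[
S_{k,\eps}(Z^{(k+1)})=S_{k,\eps}(Z^{(k)})-A_k\sum_i v_i\norm{z_i^{(k)}-M_i^{(k)}}^2 .
\]
Finally, the tightness statement \(S_{k,\eps}(Z^{(k)})=\calM_{m,\eps}(Z^{(k)})\) from Proposition~\ref{prop:majorization} turns this into \eqref{eq:descent}. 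Since the plans in Proposition~\ref{prop:majorization} may be any exact optimal selections, no uniqueness assumption enters.

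For the consequence, note that \(A_k>0\), because \(\eps>0\) makes each \(r_{n,\eps}^{(k)}\) finite and positive. Also \(v_i>0\), so the subtracted term is nonnegative and the sequence \(\calM_{m,\eps}(Z^{(k)})\) is non-increasing. It is bounded below by \(\sum_n\pi_n\eps\ge0\), since \(W_2^2\ge0\) gives \(\sqrt{W_2^2+\eps^2}\ge\eps\). A monotone sequence that is bounded below converges.

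The main point needing care is that the constant \(C_k\) in \eqref{eq:square} must be the same at \(Z^{(k)}\) and \(Z^{(k+1)}\). To confirm this, I would expand \(Q_{n,k}(Z)\) using the row marginals \(\sum_j\Gamma^{(n,k)}_{ij}=v_i\):
\[
Q_{n,k}(Z)=\sum_i v_i\norm{z_i-B_n^{(k)}(z_i^{(k)})}^2+\text{const}.
\]
Summing these with weights \(\pi_n/(2r_{n,\eps}^{(k)})\) and completing the square around \(M_i^{(k)}\) then shows that \(C_k\) depends only on quantities frozen at iteration \(k\), and not on the argument \(Z\). This is exactly what Proposition~\ref{prop:minimizer} asserts, so once that is granted the rest is immediate.
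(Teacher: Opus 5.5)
Your proposal is correct and follows the paper's proof exactly. You majorize at \(Z^{(k+1)}\) via Proposition~\ref{prop:majorization}, use the completed-square form \eqref{eq:square} from Proposition~\ref{prop:minimizer} to obtain the exact surrogate decrease, and close with tightness at \(Z^{(k)}\); convergence then follows from monotonicity plus a lower bound, where the paper simply cites nonnegativity and your bound \(\calM_{m,\eps}\ge\eps\) is a slightly sharper version of the same point.
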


The compactness needed for residual results is automatic. Since each barycentric projection lies in the convex hull of the corresponding target support, and the Weiszfeld weights form a convex combination, the updated support stays in the convex hull of all input atoms.

\begin{lemma}[Convex-hull invariance]\label{lem:hull}
Let
\[
    K_X=\conv\{x_{n,j}:n=1,\ldots,N,\ j=1,\ldots,m_n\}.
\]
If \(Z^{(0)}\in K_X^m\), then \(Z^{(k)}\in K_X^m\) for all \(k\). Even if \(Z^{(0)}\notin K_X^m\), the full update satisfies \(Z^{(1)}\in K_X^m\), and hence \(Z^{(k)}\in K_X^m\) for all \(k\ge1\).
\end{lemma}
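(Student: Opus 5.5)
The plan is to show directly that each barycentric projection $B_n^{(k)}(z_i^{(k)})$ lies in $K_X$, no matter where $Z^{(k)}$ sits. The destination $M_i^{(k)}$ is then a convex combination of points of $K_X$, and convexity of $K_X$ places it in $K_X$. Nothing from the descent analysis (Theorem~\ref{thm:descent} or Propositions~\ref{prop:majorization}--\ref{prop:minimizer}) is needed. Only the marginal constraints of the transport polytope and the form of the Weiszfeld weights \eqref{eq:lambda} are used.

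First, fix $n$, $k$, and an atom $i$. Since $\Gamma^{(n,k)}\in\Pi(v,w_n)$, the row constraint gives $\sum_{j}\Gamma^{(n,k)}_{ij}=v_i$, and every entry is nonnegative. Because $v_i>0$ is fixed, the coefficients $\Gamma^{(n,k)}_{ij}/v_i$ are well defined, nonnegative, and sum to one. So \eqref{eq:baryproj} writes $B_n^{(k)}(z_i^{(k)})$ as a convex combination of the atoms $x_{n,1},\dots,x_{n,m_n}$. It therefore lies in $\conv\{x_{n,j}\}_j\subseteq K_X$.

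This step uses only feasibility of the plan. It does not depend on optimality, on uniqueness, or on the location of $z_i^{(k)}$, and this independence is the point that makes the second claim work.

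Second, for $\eps>0$ every $r_{n,\eps}^{(k)}\ge\eps>0$ is finite, so the weights $\lambda_{n,\eps}^{(k)}$ in \eqref{eq:lambda} are strictly positive and sum to one. For $\eps=0$ with all distances positive the same holds, and the floor rule preserves it as well. Hence $M_i^{(k)}$ in \eqref{eq:destination} is a convex combination of the points $B_n^{(k)}(z_i^{(k)})\in K_X$, and so $z_i^{(k+1)}=M_i^{(k)}\in K_X$ for every $i$. In other words, $Z^{(k+1)}\in K_X^m$ holds for every $k\ge0$ and for arbitrary $Z^{(k)}\in(\R^d)^m$.

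Third, both claims follow from this. Taking $k=0$ gives $Z^{(1)}\in K_X^m$ for any initialization, and then $Z^{(k)}\in K_X^m$ for all $k\ge1$. If in addition $Z^{(0)}\in K_X^m$, the statement extends to $k=0$, so it holds for all $k$. Stated this way the induction is trivial, since membership at step $k+1$ never depends on membership at step $k$.

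There is no real obstacle in this argument. The only point needing care is the well-definedness of the convex weights: the strict positivity of $v_i$ in the projection, and the finiteness and positivity of the $\lambda$'s, which are guaranteed by $\eps>0$ or by the stated floor rule.
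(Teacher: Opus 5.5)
Your proof is correct and matches the paper's argument: the row-marginal constraint makes \(\Gamma^{(n,k)}_{ij}/v_i\) a probability vector, so each barycentric projection lies in \(\conv\{x_{n,j}\}_j\subseteq K_X\), and the convex Weiszfeld weights then place \(M_i^{(k)}\in K_X\) regardless of where \(Z^{(k)}\) lies. Your observation that only feasibility of the plan is used, not its optimality, and your aside on the \(\eps=0\) and floor-rule case go slightly beyond what the paper states but are harmless.
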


Let
\[
    R_k=\sum_{i=1}^m v_i\norm{z_i^{(k)}-M_i^{(k)}}^2
\]
be the fixed-point residual. Lemma~\ref{lem:hull} yields a uniform lower bound on \(A_k\) after the first iteration. This gives both asymptotic residual convergence and a finite-time stationarity rate.

\begin{corollary}[Residual convergence and best-residual rate]\label{cor:residual}
Assume \(\eps>0\). Let \(D_X=\max_{x,y\in K_X}\norm{x-y}\) and \(\bar D_\eps=\sqrt{D_X^2+\eps^2}\). For all \(k\ge1\),
\[
    A_k\ge \frac{1}{2\bar D_\eps}.
\]
Moreover,
\[
    R_k\to0,
\]
and for every \(K\ge1\),
\begin{equation}\label{eq:res-rate}
    \min_{1\le k\le K} R_k
    \le
    \frac{2\bar D_\eps\{\calM_{m,\eps}(Z^{(1)})-\inf_Z\calM_{m,\eps}(Z)\}}{K}.
\end{equation}
\end{corollary}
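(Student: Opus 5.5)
The plan has three parts: bound the weights $A_k$ from below using the convex hull, feed that bound into the descent inequality of Theorem~\ref{thm:descent}, and telescope.

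\emph{Lower bound on $A_k$.} By Lemma~\ref{lem:hull}, $Z^{(k)}\in K_X^m$ for every $k\ge1$. Every input atom $x_{n,j}$ also lies in $K_X$. Hence every pair $(z_i^{(k)},x_{n,j})$ is at distance at most $D_X$. For any coupling, in particular the optimal plan $\Gamma^{(n,k)}$, this gives
\[
c_n^{(k)}=\sum_{i,j}\Gamma^{(n,k)}_{ij}\norm{z_i^{(k)}-x_{n,j}}^2\le D_X^2 .
\]
Therefore $r_{n,\eps}^{(k)}\le\bar D_\eps$ for all $n$. Since $\sum_n\pi_n=1$, the definition \eqref{eq:Ak} yields
\[
A_k\ge \frac{1}{2\bar D_\eps}\sum_n\pi_n=\frac{1}{2\bar D_\eps}.
\]

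\emph{Telescoping the descent inequality.} Theorem~\ref{thm:descent} gives
\[
A_kR_k\le \calM_{m,\eps}(Z^{(k)})-\calM_{m,\eps}(Z^{(k+1)}).
\]
Combining this with the lower bound on $A_k$, for $k\ge1$ we get
\[
R_k\le 2\bar D_\eps\bigl[\calM_{m,\eps}(Z^{(k)})-\calM_{m,\eps}(Z^{(k+1)})\bigr].
\]
Summing from $k=1$ to $K$ telescopes the right-hand side to
\[
2\bar D_\eps\bigl[\calM_{m,\eps}(Z^{(1)})-\calM_{m,\eps}(Z^{(K+1)})\bigr]\le 2\bar D_\eps\bigl[\calM_{m,\eps}(Z^{(1)})-\inf_Z\calM_{m,\eps}\bigr].
\]
Using $\min_{1\le k\le K}R_k\le \frac1K\sum_{k=1}^K R_k$ then gives \eqref{eq:res-rate}.

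\emph{Finiteness and convergence.} The infimum is finite, indeed nonnegative, because $\calM_{m,\eps}\ge\eps>0$. So the partial sums $\sum_{k\ge1}R_k$ are bounded uniformly in $K$, and the series converges. Since $R_k\ge0$, summability forces $R_k\to0$.

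The only step needing real care is the lower bound on $A_k$. One must check that it uses the iterate at step $k$ itself, which is why the bound is stated only for $k\ge1$: $Z^{(0)}$ may lie outside $K_X^m$. One must also use the optimal-plan identity $c_n^{(k)}=W_2^2(\barmu_{Z^{(k)}},\mu_n)$ together with the fact that this is an average of squared distances between points of $K_X$. Everything else is routine telescoping.
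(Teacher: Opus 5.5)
Your proposal is correct and follows the paper's proof step for step. Convex-hull invariance gives $r_{n,\eps}^{(k)}\le\bar D_\eps$ and hence the lower bound on $A_k$ for $k\ge1$, and telescoping the descent inequality of Theorem~\ref{thm:descent} from $k=1$ to $K$ yields both summability of $R_k$ and the best-residual rate, with your added justifications that $c_n^{(k)}\le D_X^2$ and that $\inf_Z\calM_{m,\eps}\ge\eps$ is finite.
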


This \(O(1/K)\) statement is not a rate for the support locations; it is a rate for approaching fixed-point stationarity as measured by the natural MM residual. Under differentiability, this residual controls the gradient as well.

\begin{corollary}[Residual-to-gradient bound]\label{cor:gradient-residual}
Suppose that, for each \(n\), the map
\[
    Z\mapsto W_2^2(\barmu_Z,\mu_n)
\]
is differentiable at \(Z^{(k)}\). Let
\[
    \sigma_k=\sum_{n=1}^N\frac{\pi_n}{r_{n,\eps}^{(k)}}=2A_k,
    \qquad
    v_{\max}=\max_i v_i.
\]
Then
\begin{equation}\label{eq:grad-residual}
    \norm{\nabla \calM_{m,\eps}(Z^{(k)})}^2
    \le \sigma_k^2 v_{\max} R_k
    \le \frac{v_{\max}}{\eps^2}R_k .
\end{equation}
Consequently, a small residual implies approximate first-order stationarity at points where the discrete OT value functions are differentiable.
\end{corollary}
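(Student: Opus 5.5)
The plan is to compute the gradient of \(\calM_{m,\eps}\) at \(Z^{(k)}\) explicitly and compare it with the residual vector \((z_i^{(k)}-M_i^{(k)})_i\).

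\textbf{Step 1: the gradient of each value function.} Write \(f_n(Z)=W_2^2(\barmu_Z,\mu_n)\). By feasibility invariance (the fact underlying Proposition~\ref{prop:majorization}), the frozen-plan quadratic \(Q_{n,k}\) satisfies
\[
    Q_{n,k}\ge f_n \ \text{ everywhere}, \qquad Q_{n,k}(Z^{(k)})=f_n(Z^{(k)}).
\]
Hence \(Q_{n,k}-f_n\) attains a minimum at \(Z^{(k)}\). Since \(f_n\) is assumed differentiable there and \(Q_{n,k}\) is smooth, we get \(\nabla f_n(Z^{(k)})=\nabla Q_{n,k}(Z^{(k)})\). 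Because the row sums of \(\Gamma^{(n,k)}\) equal \(v_i\), the \(i\)-th block of this gradient is
\[
    2\sum_j\Gamma^{(n,k)}_{ij}\bigl(z_i^{(k)}-x_{n,j}\bigr)
    =2v_i\bigl(z_i^{(k)}-B_n^{(k)}(z_i^{(k)})\bigr).
\]

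\textbf{Step 2: chain rule.} Applying the chain rule through \(t\mapsto\sqrt{t+\eps^2}\), and recalling \(\lambda_{n,\eps}^{(k)}=(\pi_n/r_{n,\eps}^{(k)})/\sigma_k\), the \(i\)-th block of the gradient of the objective is
\[
    \nabla_{z_i}\calM_{m,\eps}(Z^{(k)})
    =\sum_n\frac{\pi_n}{r_{n,\eps}^{(k)}}\,v_i\bigl(z_i^{(k)}-B_n^{(k)}(z_i^{(k)})\bigr)
    =\sigma_k v_i\bigl(z_i^{(k)}-M_i^{(k)}\bigr).
\]
Equivalently, \(\nabla\calM_{m,\eps}(Z^{(k)})=\nabla S_{k,\eps}(Z^{(k)})\), which is consistent with \eqref{eq:square} since \(2A_k=\sigma_k\).

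\textbf{Step 3: norm bound.} Squaring and summing over \(i\) gives
\[
    \sigma_k^2\sum_i v_i^2\norm{z_i^{(k)}-M_i^{(k)}}^2\le\sigma_k^2 v_{\max}R_k,
\]
using \(v_i^2\le v_{\max}v_i\). This is the first inequality in \eqref{eq:grad-residual}. For the second, \(r_{n,\eps}^{(k)}\ge\eps\) gives \(\sigma_k\le\sum_n\pi_n/\eps=1/\eps\), so \(\sigma_k^2\le 1/\eps^2\).

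The main obstacle is Step 1. Identifying the gradient of the OT value function with the frozen-plan gradient requires an envelope argument. Here the clean route is the tangency of a smooth majorant at a point where the minorant \(f_n\) is differentiable, which avoids invoking Danskin's theorem or needing uniqueness of the plan. The remaining steps are routine algebra.
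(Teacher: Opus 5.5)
Your proof is correct and has the same three steps as the paper's: the gradient of each OT value function, the chain rule through the square root using the Weiszfeld weights, and then $v_i^2\le v_{\max}v_i$ together with $\sigma_k\le 1/\eps$.

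The one genuine difference is how you justify $\nabla f_n(Z^{(k)})=\nabla Q_{n,k}(Z^{(k)})$. The paper cites Danskin's theorem to assert that, at a point of differentiability, every optimal plan induces the same gradient $2v_i\bigl(z_i^{(k)}-B_n^{(k)}(z_i^{(k)})\bigr)$. You instead note that $Q_{n,k}-f_n\ge 0$ everywhere, vanishes at $Z^{(k)}$, and is differentiable there, so Fermat's rule forces its gradient to vanish.

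\textbf{What your route buys.} It is more elementary and self-contained. It reuses only the feasibility invariance behind Proposition~\ref{prop:majorization} and needs no directional-derivative formula for a minimum over a polytope. It also applies verbatim to every optimal plan at $Z^{(k)}$. So it recovers the Danskin consequence that all optimal plans yield the same barycentric projections at differentiability points, which is exactly what Corollary~\ref{cor:stationarity} relies on.

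\textbf{What the paper's route buys.} The Danskin citation places the identity in the standard envelope-theorem framework. That framework also gives the exact one-sided directional derivatives $\min_{\Gamma}\langle \nabla Q_\Gamma, h\rangle$ at nondifferentiable points, where your tangency argument gives only an upper bound. That extra information is not needed for this corollary.
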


A stronger objective rate requires more local structure. The next result gives one standard non-degenerate regime in which the residual error bound used in \PLoja and \KLoja analyses \citep{polyak_1963_GradientMethodsMinimisation,lojasiewicz_1963_ProprieteTopologiqueSousensembles,attouch_2013_ConvergenceDescentMethods,bolte_2014_ProximalAlternatingLinearized} can be verified. It is not a global convergence theorem. Rather, it records what additional local geometry is sufficient for a linear rate.

\begin{corollary}[Local linear rate in a non-degenerate regime]\label{cor:local-rate}
Let \(Z^\star\) be a fixed point satisfying the stationarity condition of Corollary~\ref{cor:stationarity}. Assume that there is a neighborhood \(U\) of \(Z^\star\) such that:
\begin{enumerate}[label=(\roman{enumi})]
    \item for each \(n\), the optimal plan between \(\barmu_Z\) and \(\mu_n\) is locally unique and constant as a function of \(Z\in U\);
    \item \(\calM_{m,\eps}\) is twice continuously differentiable on \(U\);
    \item \(Z^\star\) is a strict local minimizer and \(\nabla^2\calM_{m,\eps}(Z^\star)\) is positive definite.
\end{enumerate}
If the iterates eventually remain in a sufficiently small neighborhood of \(Z^\star\), then there exist constants \(C>0\) and \(\theta\in(0,1)\) such that
\[
    \calM_{m,\eps}(Z^{(k)})-\calM_{m,\eps}(Z^\star)
    \le C\theta^{k}
\]
for all sufficiently large \(k\).
\end{corollary}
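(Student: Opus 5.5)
We prove the local linear rate by combining the descent inequality of Theorem~\ref{thm:descent} with a local \PLoja inequality and the residual-to-gradient bound of Corollary~\ref{cor:gradient-residual}.

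\emph{Step 1: local P\L{} inequality.} By hypotheses (ii) and (iii), the gradient $\nabla\calM_{m,\eps}(Z^\star)$ vanishes and the Hessian there satisfies $\nabla^2\calM_{m,\eps}(Z^\star)\succeq 2\alpha I$ for some $\alpha>0$. By continuity of the Hessian, we can shrink $U$ to a ball $U'$ around $Z^\star$ on which $\alpha I\preceq\nabla^2\calM_{m,\eps}\preceq L I$. A Taylor expansion then gives the two-sided bound
\[
\tfrac{\alpha}{2}\norm{Z-Z^\star}^2\le \calM_{m,\eps}(Z)-\calM_{m,\eps}(Z^\star)\le \tfrac{L}{2}\norm{Z-Z^\star}^2 ,
\]
and strong convexity on the ball gives the gradient bound $\norm{\nabla\calM_{m,\eps}(Z)}\ge \alpha\norm{Z-Z^\star}$. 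Combining these yields the \PLoja inequality
\[
\calM_{m,\eps}(Z)-\calM_{m,\eps}(Z^\star)\le \tfrac{L}{2\alpha^2}\norm{\nabla\calM_{m,\eps}(Z)}^2 \qquad\text{on } U'.
\]

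\emph{Step 2: residual controls the gap.} Hypothesis (i) makes each map $Z\mapsto W_2^2(\barmu_Z,\mu_n)$ equal to the fixed quadratic $Q_{n}(Z)$ on $U$, so it is differentiable there. Corollary~\ref{cor:gradient-residual} therefore applies at every iterate $Z^{(k)}\in U'$ and gives $\norm{\nabla\calM_{m,\eps}(Z^{(k)})}^2\le \sigma_k^2 v_{\max}R_k$. Here $\sigma_k\le 1/\eps$. Setting $\Delta_k=\calM_{m,\eps}(Z^{(k)})-\calM_{m,\eps}(Z^\star)$, we obtain
\[
\Delta_k\le \tfrac{L v_{\max}}{2\alpha^2\eps^2}R_k .
\]

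\emph{Step 3: contraction.} Theorem~\ref{thm:descent} gives $\Delta_{k+1}\le\Delta_k-A_kR_k$. Corollary~\ref{cor:residual} gives $A_k\ge 1/(2\bar D_\eps)$ for $k\ge1$. Substituting the bound from Step 2,
\[
\Delta_{k+1}\le\Delta_k-\frac{\alpha^2\eps^2}{L v_{\max}\bar D_\eps}\,\Delta_k=\theta\,\Delta_k .
\]
Since $\Delta_{k+1}\ge0$, necessarily $\theta\in[0,1)$; if $\theta=0$ we replace it by any number in $(0,1)$. Iterating from the first index $k_0$ after which the iterates stay in $U'$ gives $\Delta_k\le \Delta_{k_0}\theta^{-k_0}\theta^k$, which is the claim with $C=\Delta_{k_0}\theta^{-k_0}$.

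\emph{Main obstacle.} The delicate point is Step 2: the gradient bound must be valid at every iterate. This requires the iterates to lie in the region where hypothesis (i) holds, so that the chosen plan $\Gamma^{(n,k)}$ is the locally constant plan and $\calM_{m,\eps}$ is differentiable there. The statement assumes the iterates eventually remain in a sufficiently small neighborhood, and we take that neighborhood to be $U'\subset U$. Alternatively, Step 2 can avoid differentiability theory altogether: on $U$, the function $\calM_{m,\eps}$ is built from the fixed quadratics, so its gradient is $\sum_n \pi_n\nabla Q_n/(2r_n)$. This equals $2A_k$ times the weighted displacement $(v_i(z_i-M_i))_i$, which gives the residual bound directly.
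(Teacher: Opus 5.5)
Your proposal is correct and follows the paper's argument. You derive a local \PLoja inequality from the positive-definite Hessian, use Corollary~\ref{cor:gradient-residual} to bound the gap by the residual, and combine this with the descent inequality and a lower bound on $A_k$ to get a contraction. The differences are cosmetic: you use the global bounds $\sigma_k\le 1/\eps$ and $A_k\ge 1/(2\bar D_\eps)$ where the paper uses the local constants $\Sigma_U$ and $\underline A$, and you obtain the \PLoja constant through two-sided quadratic bounds rather than directly from strong convexity.
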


Condition (i) is a strong non-degeneracy condition: it says that the exact OT LPs stay in the same active region near the fixed point. In that case the OT values are local quadratic functions of the support coordinates, and the smoothed median objective behaves like a smooth finite-dimensional objective with a non-degenerate local minimum. The eventual-neighborhood assumption is the usual local-rate hypothesis: it restricts the statement to the basin of attraction of the fixed point. One may alternatively replace it by an explicit initialization condition once a local basin is verified for a particular problem. Without such local geometry, the unconditional guarantee remains the best-residual bound \eqref{eq:res-rate}.

\subsection{Limit points and stationarity}

The residual convergence alone need not imply convergence of the entire support sequence. The next result describes all accumulation points.

\begin{proposition}[Limit-point self-consistency]\label{prop:limit}
Let \(\eps>0\). Every accumulation point \(Z^\infty\) of the iterates admits, for each \(n\), an exact optimal plan \(\Gamma^{(n,\infty)}\in\Pi(v,w_n)\) between \(\barmu_{Z^\infty}\) and \(\mu_n\) such that
\begin{equation}\label{eq:limit-fp}
    z_i^\infty=
    \sum_{n=1}^N\lambda_{n,\eps}^\infty
    \left(\frac{1}{v_i}\sum_{j=1}^{m_n}\Gamma_{ij}^{(n,\infty)}x_{n,j}\right),
    \qquad i=1,\ldots,m,
\end{equation}
where
\[
    \lambda_{n,\eps}^\infty=
    \frac{\pi_n/r_{n,\eps}^\infty}
    {\sum_{\ell=1}^N\pi_\ell/r_{\ell,\eps}^\infty},
    \qquad
    r_{n,\eps}^\infty=\sqrt{W_2^2(\barmu_{Z^\infty},\mu_n)+\eps^2}.
\]
\end{proposition}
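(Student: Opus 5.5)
The plan is a compactness argument along a subsequence, using the fact that the set of transport plans is finite-dimensional and compact and that exact optimality passes to limits.

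Let \(Z^{(k_\ell)}\to Z^\infty\) be a convergent subsequence. By Lemma~\ref{lem:hull}, \(Z^\infty\in K_X^m\). For each \(n\), the selected plans \(\Gamma^{(n,k_\ell)}\) lie in the compact polytope \(\Pi(v,w_n)\), which depends only on the weights. Since \(N\) is finite, we can pass to a further subsequence, not relabeled, along which \(\Gamma^{(n,k_\ell)}\to\Gamma^{(n,\infty)}\in\Pi(v,w_n)\) for every \(n\).

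The first step is to show that each \(\Gamma^{(n,\infty)}\) is optimal between \(\barmu_{Z^\infty}\) and \(\mu_n\). The cost \((Z,\Gamma)\mapsto\sum_{ij}\Gamma_{ij}\norm{z_i-x_{n,j}}^2\) is jointly continuous. Also, \(Z\mapsto W_2^2(\barmu_Z,\mu_n)\) is continuous: it is a minimum over a fixed compact polytope of functions continuous in \(Z\), and in fact \(W_2\) is Lipschitz in \(Z\) via the triangle inequality. For any \(\Gamma\in\Pi(v,w_n)\), optimality at each \(k_\ell\) gives \(Q(Z^{(k_\ell)},\Gamma^{(n,k_\ell)})\le Q(Z^{(k_\ell)},\Gamma)\). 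Letting \(\ell\to\infty\) yields optimality of \(\Gamma^{(n,\infty)}\). The same continuity gives \(c_n^{(k_\ell)}\to W_2^2(\barmu_{Z^\infty},\mu_n)\), hence \(r^{(k_\ell)}_{n,\eps}\to r^\infty_{n,\eps}\). Because \(\eps>0\), all these quantities are at least \(\eps\), so the weights \(\lambda^{(k_\ell)}_{n,\eps}\to\lambda^\infty_{n,\eps}\) with no singularity. It follows that the destinations converge: \(M_i^{(k_\ell)}\to\sum_n\lambda^\infty_{n,\eps}\,v_i^{-1}\sum_j\Gamma^{(n,\infty)}_{ij}x_{n,j}\), which is the right-hand side of \eqref{eq:limit-fp}.

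The key remaining step, and the main obstacle, is identifying this limit with \(z_i^\infty\). We need \(z_i^{(k_\ell)}-M_i^{(k_\ell)}\to0\), and this is exactly what Corollary~\ref{cor:residual} provides: \(R_k\to0\), and since \(v_i>0\), each term \(\norm{z_i^{(k)}-M_i^{(k)}}\to0\). We must be careful to use the residual at the same index \(k_\ell\) as the plans that define \(M_i^{(k_\ell)}\), not \(z^{(k_\ell+1)}\) versus \(z^{(k_\ell)}\). With that alignment, \(z_i^\infty=\lim z_i^{(k_\ell)}=\lim M_i^{(k_\ell)}\), which gives \eqref{eq:limit-fp}.

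Should one prefer not to rely on the corollary, the same conclusion follows from Theorem~\ref{thm:descent}. Summing the descent inequalities, with the objective bounded below by \(\eps\) and \(A_k\ge1/(2\bar D_\eps)\), gives \(\sum_k R_k<\infty\), and hence \(R_k\to0\).
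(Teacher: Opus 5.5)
Your proposal is correct and follows the paper's proof essentially step for step. You extract a subsequence along which the iterates and the selected plans converge, pass optimality to the limit by continuity of the transport cost, use \(\eps>0\) to get convergence of the Weiszfeld weights, and invoke \(R_k\to0\) from Corollary~\ref{cor:residual} to identify \(z_i^\infty\) with the limit of \(M_i^{(k_\ell)}\). The only difference is that you write out the optimality-in-the-limit inequality and the index alignment, which the paper states tersely.
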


Self-consistency is weaker than differentiable stationarity because the discrete OT value function may not be differentiable when multiple optimal plans induce different projected destinations. The next result gives the standard differentiable case using Danskin's envelope theorem logic. If the pointwise minimum of smooth linear-in-plan costs is differentiable, then all active optimal plans induce the same gradient \citep{danskin_1967_TheoryMaxMinIts}.

\begin{corollary}[Conditional stationarity]\label{cor:stationarity}
Let \(Z^\infty\) satisfy Proposition~\ref{prop:limit}. Suppose that for every \(n\), the map
\[
    Z\mapsto W_2^2(\barmu_Z,\mu_n)
\]
is differentiable at \(Z^\infty\). Then \(Z^\infty\) is a stationary point of \(\calM_{m,\eps}\).
\end{corollary}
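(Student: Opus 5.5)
We need to show that at $Z^\infty$ the gradient of $\calM_{m,\eps}$ vanishes, under differentiability of each $f_n(Z)=W_2^2(\barmu_Z,\mu_n)$ at $Z^\infty$. The plan has three steps: identify the gradient of each $f_n$ through any optimal plan, combine these gradients via the chain rule for the smoothed square root, and then substitute the self-consistency identity \eqref{eq:limit-fp}.

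\textbf{Step 1: gradient of $f_n$ from any optimal plan.} Fix $n$ and let $\Gamma^{(n,\infty)}$ be the optimal plan supplied by Proposition~\ref{prop:limit}. By feasibility invariance of $\Pi(v,w_n)$, the frozen-plan quadratic
\[
q_n(Z)=\sum_{i,j}\Gamma^{(n,\infty)}_{ij}\norm{z_i-x_{n,j}}^2
\]
satisfies $q_n(Z)\ge f_n(Z)$ for all $Z$, with equality at $Z^\infty$. Hence $q_n-f_n$ is nonnegative and attains its minimum value $0$ at $Z^\infty$. Both functions are differentiable there ($q_n$ is smooth, and $f_n$ is differentiable by hypothesis), so their gradients agree at $Z^\infty$. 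Using $\sum_j\Gamma_{ij}=v_i$, this gives
\[
\nabla_{z_i} f_n(Z^\infty)=2\sum_j\Gamma^{(n,\infty)}_{ij}(z_i^\infty-x_{n,j})=2v_i\bigl(z_i^\infty-B_n^\infty(z_i^\infty)\bigr),
\]
where $B_n^\infty$ denotes the barycentric projection induced by $\Gamma^{(n,\infty)}$. This is the Danskin-type step, and it is the only place where the differentiability hypothesis is used. Its role is to ensure that the particular plan furnished by Proposition~\ref{prop:limit}, which may be one of several optimal plans, still computes the true gradient.

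\textbf{Step 2: chain rule.} Since $\eps>0$, the map $t\mapsto\sqrt{t+\eps^2}$ is smooth on $[0,\infty)$. Therefore
\[
\nabla_{z_i}\calM_{m,\eps}(Z^\infty)=\sum_n \frac{\pi_n}{2r^\infty_{n,\eps}}\nabla_{z_i}f_n(Z^\infty)=v_i\sum_n\frac{\pi_n}{r^\infty_{n,\eps}}\bigl(z_i^\infty-B_n^\infty(z_i^\infty)\bigr).
\]

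\textbf{Step 3: use self-consistency.} Factor out $\sigma_\infty=\sum_\ell\pi_\ell/r^\infty_{\ell,\eps}>0$. The gradient becomes
\[
\nabla_{z_i}\calM_{m,\eps}(Z^\infty)=v_i\sigma_\infty\Bigl(z_i^\infty-\sum_n\lambda^\infty_{n,\eps}B_n^\infty(z_i^\infty)\Bigr),
\]
and the bracket vanishes by \eqref{eq:limit-fp}. Hence $\nabla\calM_{m,\eps}(Z^\infty)=0$, so $Z^\infty$ is a stationary point of $\calM_{m,\eps}$.

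The argument is elementary once Step 1 is in place, so Step 1 is the main point needing care.
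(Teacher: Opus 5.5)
Your proof is correct and follows the paper's argument: you obtain $\nabla_{z_i}W_2^2(\barmu_Z,\mu_n)$ at $Z^\infty$ from the limit plan $\Gamma^{(n,\infty)}$, apply the chain rule, factor out $\sum_n\pi_n/r_{n,\eps}^\infty$, and invoke \eqref{eq:limit-fp}. The only difference is in that first step, where the paper cites Danskin's theorem to conclude that every optimal plan gives the gradient; you prove this directly by observing that the frozen-plan quadratic majorizes $W_2^2(\barmu_Z,\mu_n)$ and touches it at $Z^\infty$, which is a self-contained and slightly more elementary justification of the same fact.
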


\subsection{Smoothing, stability, and resolution consistency}

Smoothing is a technical device with a uniform objective error.

\begin{proposition}[Smoothing error and near-minimizers]\label{prop:smoothing}
For every \(Z\) and every \(\eps\ge0\),
\begin{equation}\label{eq:smoothing-bound}
    0\le \calM_{m,\eps}(Z)-\calM_m(Z)\le \eps.
\end{equation}
Let \(\mathcal D\subset(\R^d)^m\) be compact. If \(\eps_\ell\downarrow0\), \(\delta_\ell\downarrow0\), and \(Z_\ell\in\mathcal D\) satisfies
\[
    \calM_{m,\eps_\ell}(Z_\ell)
    \le
    \inf_{Z\in\mathcal D}\calM_{m,\eps_\ell}(Z)+\delta_\ell,
\]
then every accumulation point of \(\{Z_\ell\}\) is a minimizer of \(\calM_m\) over \(\mathcal D\).
\end{proposition}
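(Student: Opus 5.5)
The proof splits into two parts: a pointwise bound from an elementary scalar inequality, followed by a standard \(\Gamma\)-convergence-type argument on the compact set \(\mathcal D\).

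\textbf{Pointwise bound.} For \(t\ge0\) and \(\eps\ge0\) we have
\[
0\le\sqrt{t^2+\eps^2}-t=\frac{\eps^2}{\sqrt{t^2+\eps^2}+t}\le\eps .
\]
The last inequality holds because the denominator is at least \(\eps\); the case \(\eps=0\) is trivial. Apply this with \(t=W_2(\barmu_Z,\mu_n)\), multiply by \(\pi_n\), and sum over \(n\). Since \(\sum_n\pi_n=1\), this gives \eqref{eq:smoothing-bound} directly.

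\textbf{Continuity of \(\calM_m\) on \(\mathcal D\).} The map \(Z\mapsto\barmu_Z\) is continuous into \((\calP_2,W_2)\). Indeed, with the identity coupling \(\sum_i v_i\delta_{(z_i,z_i')}\) we get
\[
W_2(\barmu_Z,\barmu_{Z'})\le\Bigl(\sum_i v_i\norm{z_i-z_i'}^2\Bigr)^{1/2}.
\]
By the triangle inequality for \(W_2\), each \(Z\mapsto W_2(\barmu_Z,\mu_n)\) is then Lipschitz, so \(\calM_m\) is continuous. Because \(\mathcal D\) is compact, \(\calM_m\) therefore attains its minimum on \(\mathcal D\), at some \(Z^\ast\).

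\textbf{Near-minimizers.} Let \(Z_\ell\to \bar Z\) along a subsequence; then \(\bar Z\in\mathcal D\) since \(\mathcal D\) is closed. Using the pointwise bound twice, once on each side, we get
\[
\calM_m(Z_\ell)\le\calM_{m,\eps_\ell}(Z_\ell)\le\inf_{\mathcal D}\calM_{m,\eps_\ell}+\delta_\ell\le \calM_{m,\eps_\ell}(Z^\ast)+\delta_\ell\le\calM_m(Z^\ast)+\eps_\ell+\delta_\ell .
\]
Pass to the limit along the subsequence. Continuity of \(\calM_m\) gives \(\calM_m(\bar Z)\le\calM_m(Z^\ast)=\min_{\mathcal D}\calM_m\), so \(\bar Z\) is a minimizer over \(\mathcal D\).

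\textbf{Main obstacle.} The only step that needs any care is the continuity of \(Z\mapsto W_2(\barmu_Z,\mu_n)\), and the explicit fixed-weight coupling bound above settles it. No compactness of the measures themselves is needed, only compactness of \(\mathcal D\).
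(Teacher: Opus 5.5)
Your proof is correct and follows essentially the same route as the paper. Both arguments sum the scalar bound \(0\le\sqrt{t^2+\eps^2}-t\le\eps\) with weights \(\pi_n\), then use the sandwich \(\calM_m(Z_\ell)\le\calM_{m,\eps_\ell}(Z_\ell)\le\inf_{\mathcal D}\calM_m+\eps_\ell+\delta_\ell\) together with continuity of \(\calM_m\) on the compact set \(\mathcal D\). The differences are minor: you compare against an attained minimizer \(Z^\ast\) rather than the infimum, and you justify the continuity of \(Z\mapsto W_2(\barmu_Z,\mu_n)\) explicitly via the identity coupling, which the paper simply asserts.
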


\begin{corollary}[Fixed-smoothing near-minimizers]\label{cor:fixed-eps}
Let \(\mathcal D\subset(\R^d)^m\) be any domain and fix \(\eps\ge0\). If \(\hat Z\in\mathcal D\) is a \(\delta\)-near-minimizer of \(\calM_{m,\eps}\) over \(\mathcal D\), that is
\[
    \calM_{m,\eps}(\hat Z)\le \inf_{Z\in\mathcal D}\calM_{m,\eps}(Z)+\delta,
\]
then \(\hat Z\) is a \((\delta+\eps)\)-near-minimizer of the exact nonsmoothed objective \(\calM_m\) over the same domain:
\[
    \calM_m(\hat Z)\le \inf_{Z\in\mathcal D}\calM_m(Z)+\delta+\eps.
\]
\end{corollary}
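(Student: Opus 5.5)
The argument is a direct consequence of the uniform smoothing bound \eqref{eq:smoothing-bound} in Proposition~\ref{prop:smoothing}. We will not use any property of $\mathcal D$ beyond nonemptiness, and no compactness is needed, so the corollary holds for an arbitrary domain. The inequality $0\le \calM_{m,\eps}(Z)-\calM_m(Z)\le\eps$ holds for every $Z$ and every $\eps\ge 0$, and so in particular on all of $\mathcal D$. If that bound is not simply taken from Proposition~\ref{prop:smoothing}, it follows termwise from the elementary inequality
\[
t\le\sqrt{t^2+\eps^2}\le t+\eps, \qquad t\ge 0,
\]
applied with $t=W_2(\barmu_Z,\mu_n)$ and summed against the weights $\pi_n$, which sum to one.

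The key steps are two applications of the two sides of this bound. First, the lower side gives $\calM_m(\hat Z)\le\calM_{m,\eps}(\hat Z)$. Second, the upper side gives $\calM_{m,\eps}(Z)\le \calM_m(Z)+\eps$ for every $Z\in\mathcal D$. Taking the infimum over $\mathcal D$ then yields
\[
\inf_{\mathcal D}\calM_{m,\eps}\le \inf_{\mathcal D}\calM_m+\eps .
\]

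Chaining these with the near-minimizer hypothesis gives
\[
\calM_m(\hat Z)\le \calM_{m,\eps}(\hat Z)\le \inf_{\mathcal D}\calM_{m,\eps}+\delta\le \inf_{\mathcal D}\calM_m+\delta+\eps .
\]

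There is no real obstacle. The only point needing care is that the infima may not be attained, which is why we pass the pointwise bound through the infimum rather than comparing against a minimizer. The case $\eps=0$ is trivial, since then $\calM_{m,0}=\calM_m$.
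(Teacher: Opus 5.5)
Your proof is correct and follows the paper's argument essentially step for step. You pass the uniform bound of Proposition~\ref{prop:smoothing} through the infimum over \(\mathcal D\) to get \(\inf_{\mathcal D}\calM_{m,\eps}\le\inf_{\mathcal D}\calM_m+\eps\), and then chain it with \(\calM_m\le\calM_{m,\eps}\) and the near-minimizer hypothesis. Like the paper, you need no compactness and no attained minimizer.
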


This fixed-\(\eps\) statement is the practical reason for using a very small smoothing level. Solving the smoothed problem accurately also gives a controlled near-minimizer of the exact objective, with a deterministic additive error no larger than \(\eps\).

The median objective is stable under perturbations of the input measures. This is an objective-value statement, not a stability theorem for minimizers.

\begin{proposition}[Objective stability]\label{prop:stability}
For any fixed candidate \(\zeta\in\calP_2(\R^d)\) and any two input collections \(\mu_1,\ldots,\mu_N\) and \(\nu_1,\ldots,\nu_N\),
\[
    \left|
    \sum_{n=1}^N\pi_n W_2(\zeta,\mu_n)
    -
    \sum_{n=1}^N\pi_n W_2(\zeta,\nu_n)
    \right|
    \le
    \sum_{n=1}^N\pi_n W_2(\mu_n,\nu_n).
\]
\end{proposition}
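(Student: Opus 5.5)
The plan is to apply the triangle inequality for the metric \(W_2\) on \(\calP_2(\R^d)\) to each term separately and then sum with the nonnegative weights \(\pi_n\).

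Fix \(n\). Because \(W_2\) is a metric on \(\calP_2(\R^d)\) and \(\zeta,\mu_n,\nu_n\in\calP_2(\R^d)\), the triangle inequality gives
\[
    W_2(\zeta,\mu_n)\le W_2(\zeta,\nu_n)+W_2(\nu_n,\mu_n).
\]
Exchanging the roles of \(\mu_n\) and \(\nu_n\) gives the reverse inequality, so together they yield the reverse triangle inequality
\[
    \bigl|W_2(\zeta,\mu_n)-W_2(\zeta,\nu_n)\bigr|\le W_2(\mu_n,\nu_n).
\]
All quantities here are finite because every measure involved has finite second moment.

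Next, multiply by \(\pi_n\ge0\) and sum over \(n\). Applying the ordinary triangle inequality for absolute values to the sum of differences gives
\[
    \Bigl|\sum_n\pi_n W_2(\zeta,\mu_n)-\sum_n\pi_n W_2(\zeta,\nu_n)\Bigr|
    \le\sum_n\pi_n\bigl|W_2(\zeta,\mu_n)-W_2(\zeta,\nu_n)\bigr|
    \le\sum_n\pi_n W_2(\mu_n,\nu_n),
\]
which is the claimed bound.

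No step here is a genuine obstacle. The only ingredient to cite is the metric property of \(W_2\), in particular the triangle inequality, which holds by the standard gluing lemma (see \citet{villani_2009_OptimalTransportOld}). Nothing about the fixed-weight structure or the smoothing is used, so the same argument also applies verbatim to the smoothed functional. This is because \(t\mapsto\sqrt{t^2+\eps^2}\) is \(1\)-Lipschitz, and composing it with the reverse triangle inequality above gives the same per-term bound.
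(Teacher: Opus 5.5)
Your proposal is correct and matches the paper's proof, which likewise applies the reverse triangle inequality for \(W_2\) to each term, multiplies by \(\pi_n\), and sums. You additionally spell out the triangle inequality for absolute values of the summed differences, which the paper leaves implicit. Your remark about the smoothed functional is also valid.
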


Finally, we record the resolution statement. It concerns exact or near-exact minimizers over increasingly rich fixed-weight or other discrete approximation classes. It does not say that a nonconvex solver run at a fixed \(m\) finds the full free-support median over arbitrary weights. Before stating the general theorem, we close the loop for the approximation class used by the algorithm.

\begin{proposition}[Density of uniform empirical classes]\label{prop:uniform-density}
Let \(K\subset\R^d\) be compact. For every \(\mu\in\calP(K)\), there exists a sequence of uniform empirical measures
\[
    \nu_m=\frac1m\sum_{i=1}^m\delta_{z_i^{(m)}},
    \qquad z_i^{(m)}\in K,
\]
such that \(W_2(\nu_m,\mu)\to0\). Consequently, the fixed-uniform-weight class used by the algorithm satisfies the density hypothesis below.
\end{proposition}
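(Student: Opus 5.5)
The plan is a two-stage approximation: first replace \(\mu\) by a finitely supported measure with arbitrary weights, then replace those weights by multiples of \(1/m\). Both stages are controlled by explicit couplings, and the triangle inequality for \(W_2\) combines them.

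\emph{Stage 1 (quantization).} Fix \(\delta>0\). Since \(K\) is compact, cover it by finitely many Borel cells \(C_1,\ldots,C_L\) of diameter at most \(\delta\); for instance, intersect \(K\) with a partition of \(\R^d\) into half-open cubes. Discard empty cells and pick a point \(y_\ell\in C_\ell\cap K\) in each remaining one. Define
\[
    \mu^\delta=\sum_\ell \mu(C_\ell)\delta_{y_\ell}.
\]
This measure is supported in \(K\). The coupling that sends each \(x\in C_\ell\) to \(y_\ell\) gives
\[
    W_2^2(\mu,\mu^\delta)\le\int\norm{x-T(x)}^2\,d\mu\le\delta^2 .
\]

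\emph{Stage 2 (rounding the weights).} Write \(p_\ell=\mu(C_\ell)\). For each \(m\), choose nonnegative integers \(k_\ell\) with \(\sum_\ell k_\ell=m\) and \(|k_\ell/m-p_\ell|\le 1/m\) for every \(\ell\). One way is to take floors of \(mp_\ell\) and distribute the remaining fewer than \(L\) units one each to distinct cells. Let \(\nu_m\) place \(k_\ell\) uniform atoms of mass \(1/m\) at \(y_\ell\). Couple \(\mu^\delta\) and \(\nu_m\) by leaving the common mass \(\min(p_\ell,k_\ell/m)\) in place at each \(y_\ell\) and moving the excess arbitrarily. The excess totals
\[
    \tfrac12\sum_\ell|p_\ell-k_\ell/m|\le L/(2m),
\]
and each unit of it travels at most \(D=\operatorname{diam}K\). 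Hence
\[
    W_2^2(\mu^\delta,\nu_m)\le D^2L/(2m).
\]
This is the standard total-variation bound \(W_2^2\le D^2\,\mathrm{TV}\) on a bounded set.

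\emph{Combining.} The triangle inequality gives
\[
    W_2(\mu,\nu_m)\le\delta+D\sqrt{L_\delta/(2m)}.
\]
To build a single sequence, take \(\delta_j=1/j\) with the corresponding cell counts \(L_j\). Choose increasing thresholds \(m_j\) such that \(D\sqrt{L_j/(2m)}\le 1/j\) for all \(m\ge m_j\). For \(m_j\le m<m_{j+1}\), use the Stage 2 construction with \(\delta=\delta_j\); for \(m<m_1\), use arbitrary points of \(K\). Then \(W_2(\nu_m,\mu)\le 2/j\to0\), and all atoms lie in \(K\) by construction.

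The main obstacle is this diagonalization, which must produce a sequence indexed by every \(m\) rather than only along a subsequence. The thresholds \(m_j\) are exactly what handle it.

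The final sentence of the statement follows immediately. Uniform weights \(v_i=1/m\) are the fixed-weight class used by Algorithm~\ref{alg:direct}, and we have shown that it is \(W_2\)-dense in \(\calP(K)\), with the atoms themselves lying in \(K\). This is the density hypothesis needed below.
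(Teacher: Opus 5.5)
Your proof is correct and follows the same route as the paper. Both approximate $\mu$ by a finitely supported measure with atoms in $K$, round its weights to multiples of $1/m$ (allowing repeated atoms), and conclude with the triangle inequality. The difference is that you make each step explicit and quantitative: the cell-partition quantization replaces the paper's appeal to density of finitely supported measures, the total-variation coupling bound $W_2^2(\mu^\delta,\nu_m)\le D^2L/(2m)$ replaces weak convergence on a compact set, and the thresholds $m_j$ spell out the diagonalization that the paper leaves implicit in ``first take $m\to\infty$, and then choose $\eta$ closer to $\mu$.''
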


\begin{theorem}[Resolution consistency]\label{thm:resolution}
Let \(K\subset\R^d\) be compact and let \(\Med\) denote the minimizer set of \(\calM\) over \(\calP(K)\). Let \(\calA_m(K)\subset\calP(K)\) be approximation classes such that for every \(\mu^\star\in\Med\),
\[
    \inf_{\nu\in\calA_m(K)} W_2(\nu,\mu^\star)\to0.
\]
Define
\[
    \beta_m=\inf_{\nu\in\calA_m(K)} \sum_{n=1}^N\pi_n W_2(\nu,\mu_n).
\]
Then \(\beta_m\to \inf_{\mu\in\calP(K)}\calM(\mu)\). If \(\hat\mu_m\in\calA_m(K)\) are near-minimizers satisfying
\[
    \calM(\hat\mu_m)-\beta_m\to0,
\]
then
\[
    \dist_{W_2}(\hat\mu_m,\Med)\to0.
\]
If the median is unique, then \(W_2(\hat\mu_m,\mu^\star)\to0\).
\end{theorem}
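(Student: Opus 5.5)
The plan is a standard $\Gamma$-convergence style argument on the compact metric space $(\calP(K),W_2)$, using the Lipschitz continuity of $\calM$ in $W_2$. Write $\calM^\star=\inf_{\mu\in\calP(K)}\calM(\mu)$. First I will show that $\calM^\star$ is attained and that $\Med$ is nonempty and compact. By the triangle inequality, $|\calM(\mu)-\calM(\nu)|\le\sum_n\pi_nW_2(\mu,\nu)=W_2(\mu,\nu)$, so $\calM$ is $1$-Lipschitz. This is essentially Proposition~\ref{prop:stability} with the roles of candidate and inputs swapped, or directly the triangle inequality. Since $K$ is compact, $\calP(K)$ is compact under weak convergence, and the preliminaries note that $W_2$ metrizes weak convergence there. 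Hence $\calM$ attains its minimum, and $\Med$ is a nonempty closed, hence compact, subset.

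Second, I will prove $\beta_m\to\calM^\star$. Because $\calA_m(K)\subset\calP(K)$, we have $\beta_m\ge\calM^\star$. Fix any $\mu^\star\in\Med$ and choose $\nu_m\in\calA_m(K)$ with $W_2(\nu_m,\mu^\star)\le\inf_{\calA_m(K)}W_2(\cdot,\mu^\star)+1/m$. The density hypothesis and the Lipschitz bound then give $\beta_m\le\calM(\nu_m)\le\calM^\star+W_2(\nu_m,\mu^\star)\to\calM^\star$. Only one median is needed here, so nonemptiness of $\Med$ from step one is what makes this work.

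Third, for the near-minimizers I argue by contradiction. Suppose $\dist_{W_2}(\hat\mu_m,\Med)\ge\eta>0$ along a subsequence. By compactness of $\calP(K)$, extract a further subsequence with $\hat\mu_m\to\hat\mu$ in $W_2$. By continuity, $\calM(\hat\mu)=\lim\calM(\hat\mu_m)=\lim\beta_m=\calM^\star$, so $\hat\mu\in\Med$. This contradicts $\dist_{W_2}(\hat\mu_m,\Med)\ge\eta$, since $\dist_{W_2}(\cdot,\Med)$ is $1$-Lipschitz and hence continuous. When the median is unique, $\Med=\{\mu^\star\}$, and the distance statement reads $W_2(\hat\mu_m,\mu^\star)\to0$.

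The main point requiring care is the compactness and topology step. I must justify that $\calP(K)$ is $W_2$-compact, using Prokhorov together with the equivalence of weak and $W_2$ convergence on compact $K$. I must also make sure the limit $\hat\mu$ stays in $\calP(K)$, which holds because $K$ is closed and so the support is preserved under weak limits. Everything else is the Lipschitz estimate.
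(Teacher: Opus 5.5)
Your proposal is correct and follows essentially the same route as the paper: $\beta_m\ge\inf_{\mu\in\calP(K)}\calM(\mu)$ by inclusion, the matching upper bound by approximating a median within $\calA_m(K)$, and then a compactness/subsequence contradiction argument for the near-minimizers. Your explicit $1$-Lipschitz bound for $\calM$ and your check that $\Med$ is nonempty, by compactness of $\calP(K)$ and continuity of $\calM$, make precise two points the paper's proof uses only implicitly.
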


By Proposition~\ref{prop:uniform-density}, the theorem applies in particular to the uniform-weight particle classes used by the proposed method. This theorem is complementary to the statistical consistency results in \citet{you_2025_WassersteinMedianProbability}. Here the inputs are fixed, the approximation class becomes richer, and statistical consistency studies random empirical inputs and their limiting population median.

\section{Implementation}\label{sec:implementation}

The direct and nested solvers use the same exact OT subproblem. At iteration \(k\), each input measure requires solving
\[
    \min_{\Gamma\in\Pi(v,w_n)}\sum_{i,j}\Gamma_{ij}\norm{z_i^{(k)}-x_{n,j}}^2.
\]
The cost matrix requires \(O(d m m_n)\) operations to form, and storing the plan requires \(O(m m_n)\) memory. Barycentric projections are linear in the number of plan entries. The exact OT solver is therefore the dominant cost.

\begin{table}[ht]
\centering
\caption{Exact OT subproblem accounting for one outer median iteration. Here \(L_k\) is the number of inner free-support barycenter iterations used by a nested method at outer step \(k\).}
\label{tab:cost}
\begin{tabular}{lll}
\toprule
Method & Main operation & \makecell{Exact OT subproblems \\per outer iteration} \\
\midrule
Direct & One projection relocation & \(N\) \\
Nested-\(L\) & \(L\) inner barycenter iterations & \(N L\) \\
Nested-tight & Inner barycenter to tolerance & \(N L_k\) \\
\bottomrule
\end{tabular}
\end{table}

The Wasserstein medoid is also used in some experiments as an input-restricted robust reference. It is not an iterative free-support method, so it is not included in Table~\ref{tab:cost} since computing it requires a one-time set of pairwise exact OT distances among the input measures.

The primary diagnostics include the nonsmoothed median objective \(\calM_m\), the smoothed objective \(\calM_{m,\eps}\), the residual \(R_k\), the effective weights \(\lambda_{n,\eps}^{(k)}\), runtime, and total exact OT subproblem count. The effective weights are especially useful in contamination experiments because outlying measures should receive smaller weights if the median is behaving robustly.

For reproducibility, experiments should fix the exact OT solver, deterministic initialization, and a deterministic solver tie-rule when available. The descent theorem is invariant to which exact optimal plan is selected, but the support trajectory may depend on the selected vertex in degenerate transport LPs. Reporting final-objective sensitivity to solver tie rules is therefore useful in degenerate examples.

\section{Experiments}\label{sec:experiments}

The experiments evaluate the proposed method as a solver for empirical fixed-weight free-support Wasserstein median problems. The Wasserstein median itself is not new; the computational question is whether the direct Wasserstein--Weiszfeld relocation reaches nearly the same objective quality as a well-solved exact nested Weiszfeld method while using fewer exact OT subproblems. All comparisons in this section use exact discrete OT subproblems. No entropic or Sinkhorn barycenter routine is used.

Unless stated otherwise, all free-support methods use the same initialization, the same candidate support size, uniform candidate weights, the same exact OT solver, and the same stopping tolerance. The direct method is compared with nested exact Weiszfeld methods in which each outer Weiszfeld step calls an exact OT free-support barycenter solver for a fixed number of inner iterations, or until a tight inner tolerance is reached. The Wasserstein medoid is included as an input-restricted robust reference, and the Wasserstein barycenter is included only in contamination and application experiments as a mean-like comparison. The main reported quantities are wall-clock time, final nonsmoothed median objective, relative objective gap, total number of exact OT subproblems, outer iterations, inner iterations for nested methods, and final fixed-point residual. Objective gaps are computed relative to the best nonsmoothed median objective observed among the compared methods in the same replicate.

The experiments were run with deterministic seeds indexed by configuration and repetition. For each experiment, the replication code saves raw solver objects, objective histories, residual histories, final Weiszfeld weights, summary tables, and plotting data. The absolute runtimes reported below depend on the workstation and implementation; the more portable computational measure is the total number of exact OT subproblems.

\subsection{Exact solver efficiency}\label{sec:exp-efficiency}

The first set of experiments compares the direct method with exact nested Weiszfeld solvers on synthetic empirical measures. Each input measure is a uniformly weighted empirical measure generated from a Gaussian mixture. The base setting uses \(N=10\) input measures, input support size \(m_n=100\), median support size \(m=50\), and dimension \(d=2\). We compare Direct, Nested-2, Nested-5, Nested-10, Nested-tight, and Medoid. The nested methods use the same outer Weiszfeld weights as the direct method, but update the candidate support by approximately solving the weighted free-support barycenter subproblem at each outer iteration.

\begin{table}[t]
\centering
\small
\setlength{\tabcolsep}{4pt}
\renewcommand{\arraystretch}{1.12}
\caption{Base synthetic benchmark. Values are averages over repetitions. The gap is the relative nonsmoothed median objective gap relative to the best method in each replicate. OT denotes the total number of exact OT subproblems.}
\label{tab:exp-base-solver}
\begin{tabular}{lrrrrrr}
\toprule
Method &
\makecell{Runtime\\(sec)} &
Objective &
Gap &
\makecell{Outer\\iter.} &
\makecell{Inner\\iter.} &
\makecell{OT\\solves} \\
\midrule
Direct       & 0.0409 & 0.5165 & \(7.99\times10^{-4}\) & 8.8 & --   & 98  \\
Nested-2     & 0.0616 & 0.5163 & \(6.74\times10^{-4}\) & 6.8 & 13.6 & 146 \\
Nested-5     & 0.1480 & 0.5162 & \(5.55\times10^{-4}\) & 6.6 & 33.0 & 340 \\
Nested-10    & 0.3087 & 0.5162 & \(5.44\times10^{-4}\) & 6.8 & 68.0 & 690 \\
Nested-tight & 0.1106 & 0.5162 & \(5.44\times10^{-4}\) & 6.8 & 24.0 & 250 \\
Medoid       & 0.0366 & 0.5317 & \(1.61\times10^{-2}\) & --  & --   & 45  \\
\bottomrule
\end{tabular}
\end{table}

Table~\ref{tab:exp-base-solver} shows the basic trade-off. The direct method is slightly less accurate than the tightly solved nested method in this small benchmark, but its final objective is very close: the mean relative gap is \(7.99\times10^{-4}\), compared with \(5.44\times10^{-4}\) for Nested-tight. The direct method uses 98 exact OT subproblems on average, while Nested-tight uses 250, and Nested-10 uses 690. In wall-clock time, Direct is about 2.7 times faster than Nested-tight and about 7.6 times faster than Nested-10 in this base setting.

Figure~\ref{fig:exp-objective-time} shows objective and residual histories in the base setting. The direct method makes most of its objective progress quickly and reaches a small objective gap at substantially lower wall-clock time than the high-inner-budget nested methods. The residual panel shows that the direct method also reaches the numerical residual tolerance, although the nested methods reduce the residual more rapidly per outer iteration because each outer step contains additional inner barycenter work.

\begin{figure}[t]
\centering
\includegraphics[width=.95\textwidth]{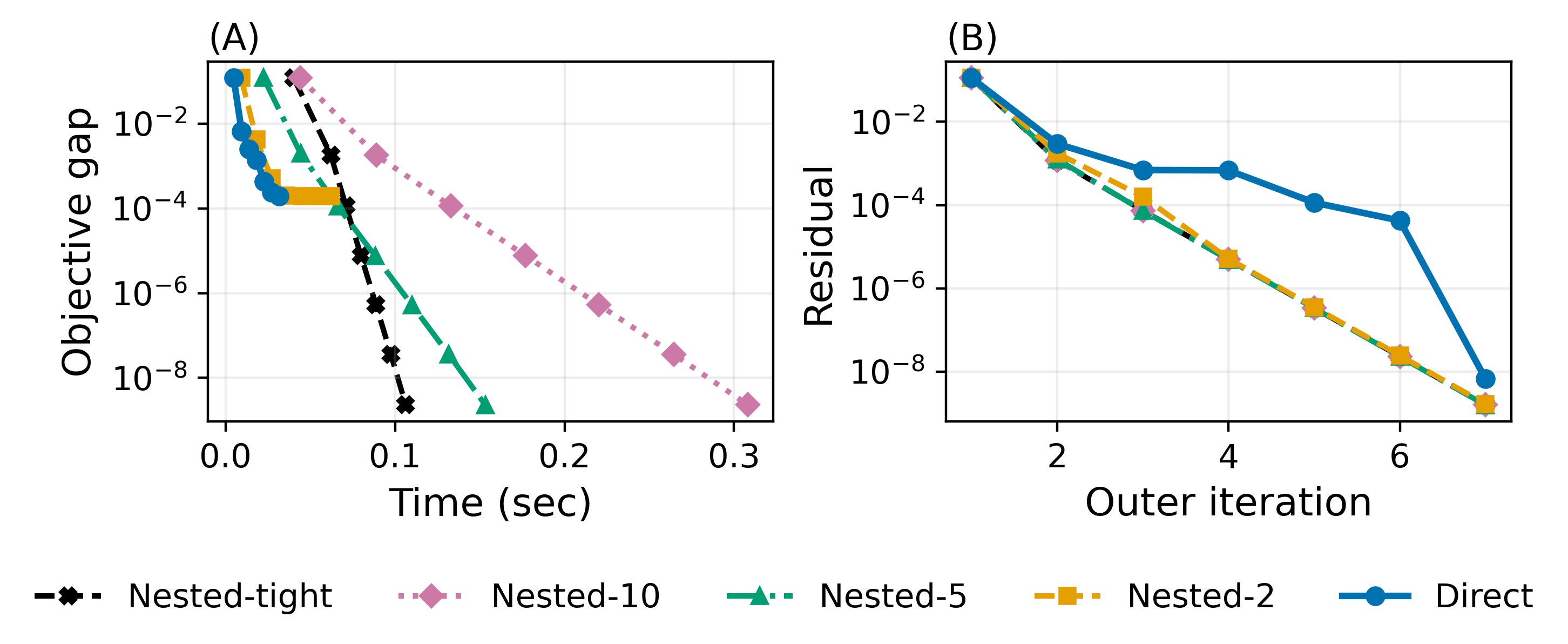}
\caption{Base synthetic benchmark. (A) Relative nonsmoothed median objective gap versus wall-clock time. (B) Fixed-point residual versus outer iteration.}
\label{fig:exp-objective-time}
\end{figure}

To assess scaling, we vary one problem parameter at a time: the number of input measures \(N\), the input support size \(m_n\), the candidate support size \(m\), and the dimension \(d\). Figure~\ref{fig:exp-scaling-runtime} summarizes runtime. Across all sweeps, Direct has the lowest runtime among the free-support median solvers. The gap between Direct and the nested methods is most visible as \(N\), \(m_n\), or \(m\) increases, which is consistent with the cost accounting in Section~\ref{sec:implementation} that nested methods repeatedly solve weighted barycenter subproblems, whereas the direct method performs one barycentric relocation per outer iteration. The corresponding exact OT subproblem counts show the same qualitative pattern and are reported in the supplementary material.

\begin{figure}[t]
\centering
\includegraphics[width=.88\textwidth]{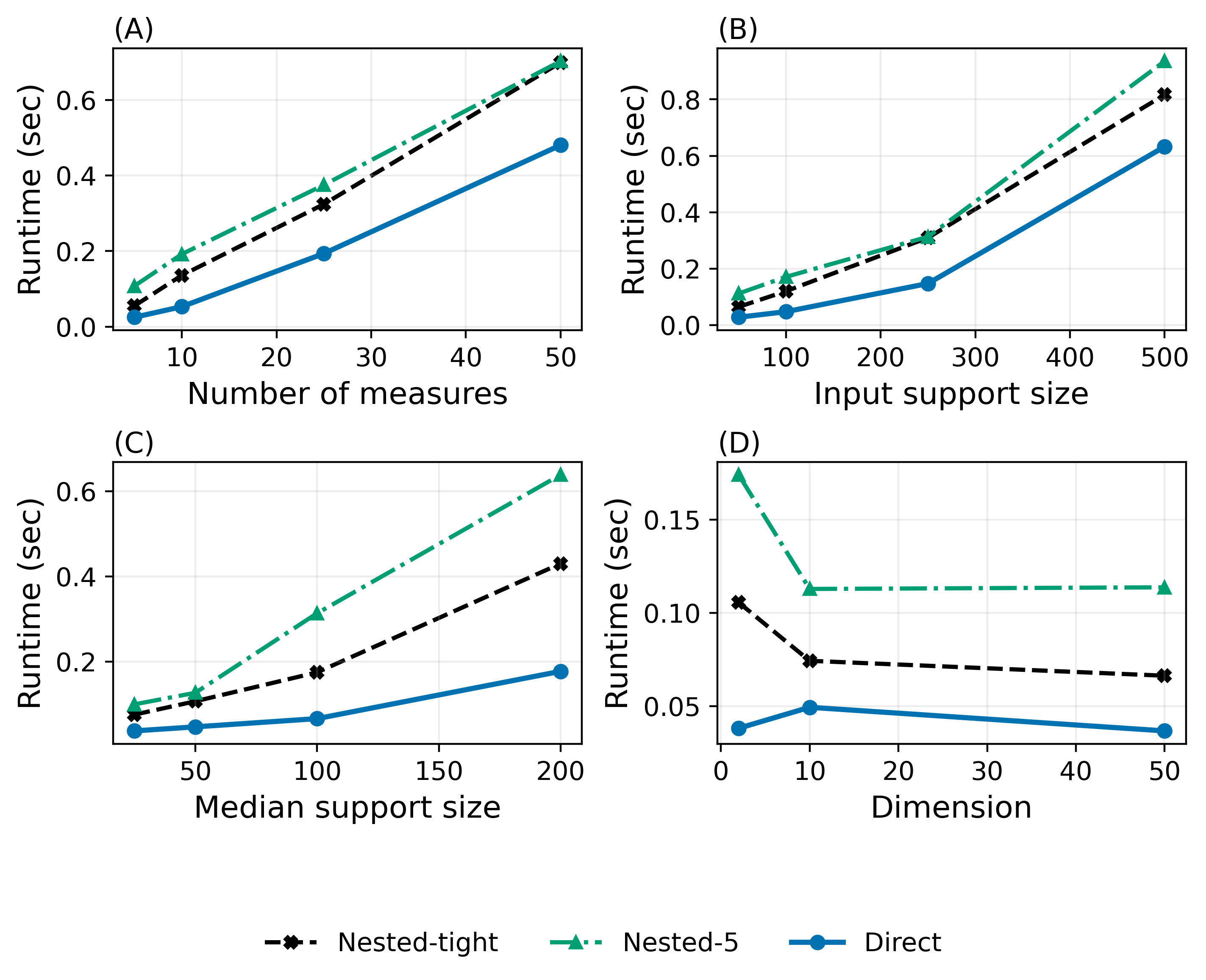}
\caption{Synthetic scaling benchmark: runtime as a function of (A) number of measures, (B) input support size, (C) median support size, and (D) dimension.}
\label{fig:exp-scaling-runtime}
\end{figure}

The second solver benchmark isolates the cost of the inner barycenter solve in the nested method. We use a base synthetic setting and a larger setting, then compare fixed inner budgets and tolerance-based inner stopping rules. The direct method is included as a reference because it performs the closed-form MM relocation without any inner barycenter loop.

Figure~\ref{fig:exp-inner-burden} plots relative objective gap against exact OT effort and runtime. In the base setting, increasing the fixed inner budget from 2 to 10 improves the gap from \(2.71\times10^{-4}\) to \(3.85\times10^{-6}\), but increases the OT count from 162 to 690 and the runtime from 0.070 seconds to 0.311 seconds. Increasing the budget beyond 10 gives essentially no objective improvement in this setting but continues to increase computational cost. The larger setting shows the same qualitative pattern: tighter inner solves reduce the objective gap but require many more exact OT subproblems. These results quantify the computational bottleneck that the direct method is designed to avoid.

\begin{figure}[t]
\centering
\includegraphics[width=.49\textwidth]{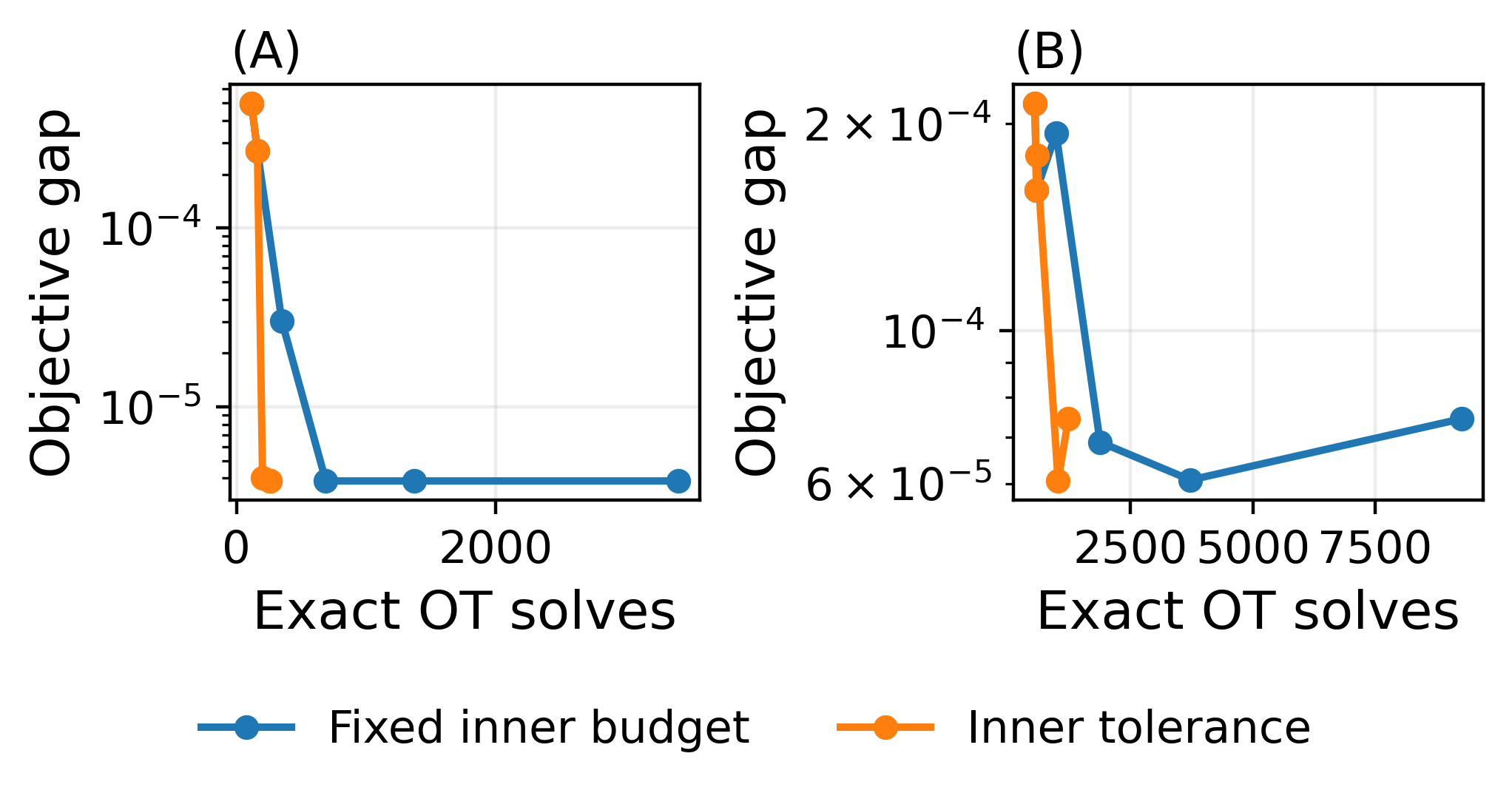}\hfill
\includegraphics[width=.49\textwidth]{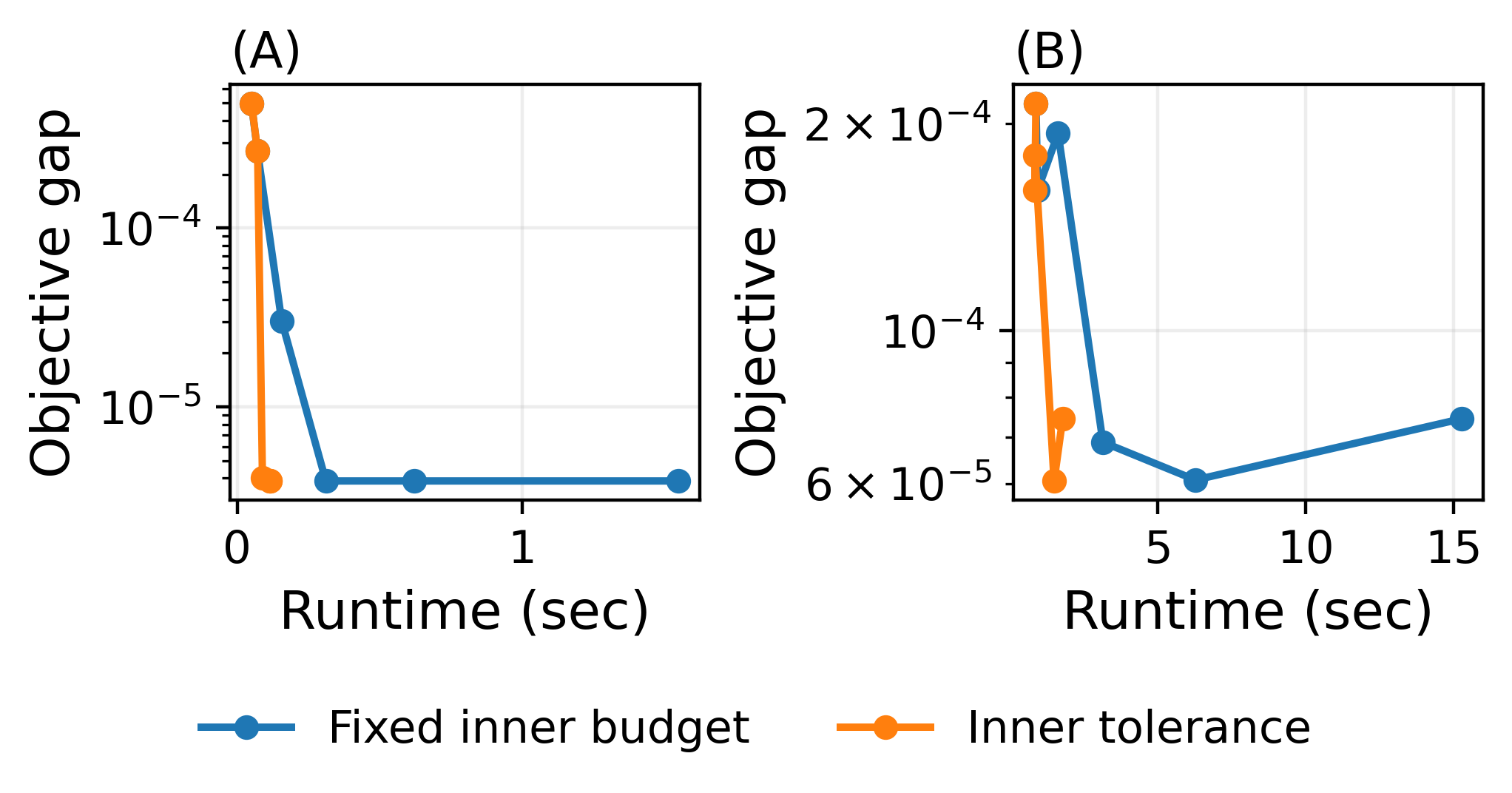}
\caption{Inner-solver burden in exact nested Weiszfeld. Left: relative objective gap versus exact OT subproblems. Right: relative objective gap versus runtime. In each figure, panel (A) is the base synthetic setting and panel (B) is the larger setting.}
\label{fig:exp-inner-burden}
\end{figure}

We also tested \(\varepsilon=\rho D_0\) with \(\rho\in\{0,10^{-10},10^{-8},10^{-6},10^{-4},10^{-2}\}\). No singularity events occurred in these runs. Small positive values were stable and had little effect on the nonsmoothed objective, while \(\rho=10^{-2}\) could alter the finite-support solution. 

\subsection{Robustness under contaminated distributions}\label{sec:exp-contamination}

The next experiment illustrates the robust behavior of the median objective while retaining the exact OT solver comparison. We generate \(N=20\) empirical Gaussian measures in \(\mathbb R^2\), contaminate a fraction \(\omega_{\mathrm{out}}\) of them by shifting their means by \(\Delta_{\mathrm{out}}\), and compare Direct, Nested-tight, Medoid, and the Wasserstein barycenter. The barycenter is included only as a mean-like reference.

Figure~\ref{fig:exp-contamination-error} shows the distance to the clean reference as the outlier fraction increases. The barycenter is strongly affected by outliers: averaged across outlier magnitudes, its distance to the clean reference increases from 0.106 with no outliers to 4.715 at 40\% outliers. The direct median increases from 0.094 to 0.610 over the same range, closely matching Nested-tight. The medoid is robust but input-restricted, and is generally less accurate than the free-support median in this experiment.

The final Weiszfeld weights provide an interpretable diagnostic of this robustness. At 40\% outliers, the total weight assigned to outlying measures is about 0.046 on average, far below their nominal 0.4 outer weight. Thus the median solver automatically downweights distant contaminated measures. Direct is also substantially cheaper than Nested-tight in this experiment: averaged over all contamination configurations, Direct takes 0.292 seconds and 318 exact OT subproblems, while Nested-tight takes 0.962 seconds and 995 exact OT subproblems, with nearly identical median objectives. The weight and runtime diagnostic plots are included in the supplementary material.

\begin{figure}[t]
\centering
\includegraphics[width=.95\textwidth]{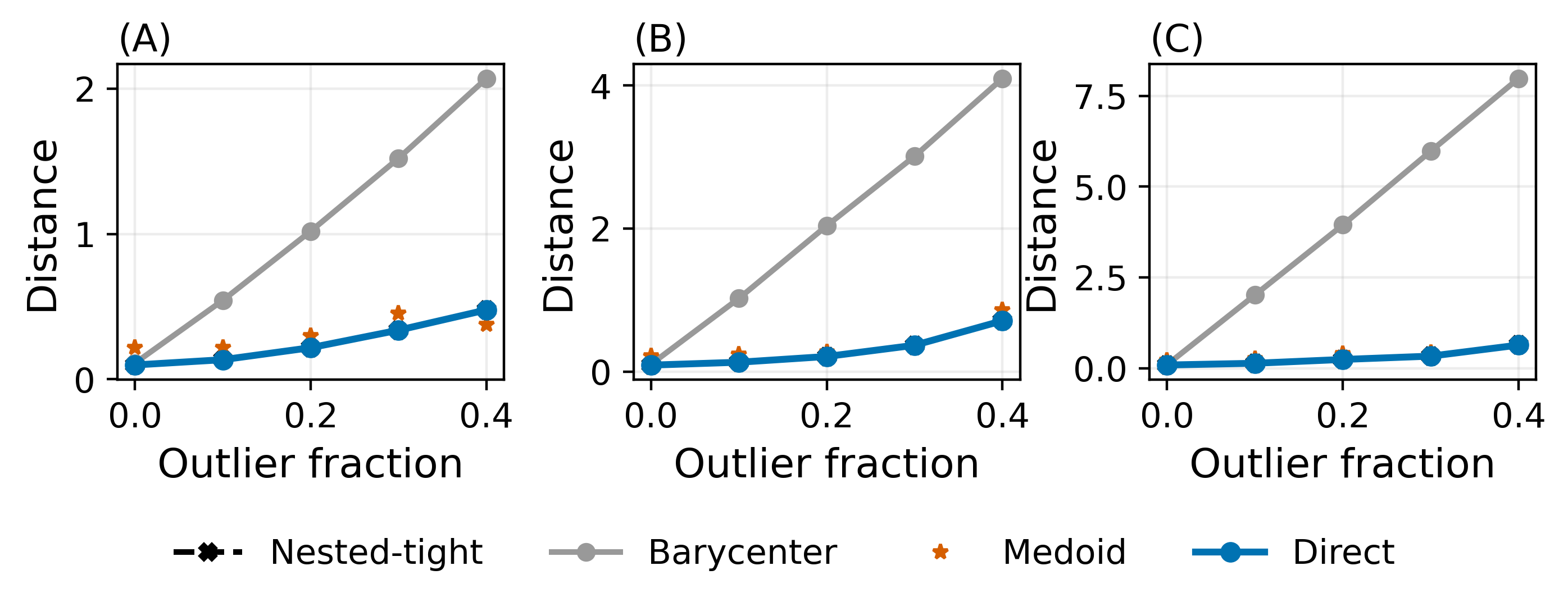}
\caption{Contaminated Gaussian experiment. Distance to the clean reference versus outlier fraction for three outlier shifts.}
\label{fig:exp-contamination-error}
\end{figure}

\subsection{Posterior aggregation and image prototypes}\label{sec:exp-applications}

We close with two statistical-computing examples. These are not intended to introduce new statistical functionals; rather, they illustrate what the exact median solver computes in two settings where robust distributional summaries are useful.

The first application is posterior aggregation. We generate Gaussian subset posterior draws from a simulated conjugate regression problem and corrupt a controlled fraction of subset posteriors by shifting or inflating them. The goal is not to benchmark MCMC, but to compare exact OT aggregation summaries after posterior draws have been obtained. We compare Direct, Nested-tight, Medoid, and Wasserstein barycenter. Accuracy is measured against the analytic full-data posterior.

With no corruption, the barycenter and median have similar posterior mean error. As corruption increases, the barycenter deteriorates more quickly. At 30\% corrupted subsets, the barycenter posterior mean error is 0.053 on average, while Direct and Nested-tight are both 0.032. Covariance and predictive RMSE show the same qualitative pattern. Direct and Nested-tight have nearly identical statistical accuracy, but Direct is faster: averaged over the posterior aggregation grid, Direct takes 0.309 seconds and 180 exact OT subproblems, while Nested-tight takes 0.511 seconds and 269 exact OT subproblems. The total Weiszfeld weight assigned to corrupted subset posteriors is also below their nominal fraction: at 30\% corruption, Direct assigns total weight about 0.149 to corrupted subsets.

Figure~\ref{fig:exp-posterior-supports} visualizes one representative 30\% corruption setting with \(K=10\) subsets and median support size \(m=100\). The direct and nested median supports are visually close to each other and remain concentrated around the analytic posterior contour, while the medoid is constrained to an input posterior draw cloud. The full posterior metric and weight curves are reported in the supplementary material.

\begin{figure}[t]
\centering
\includegraphics[width=.95\textwidth]{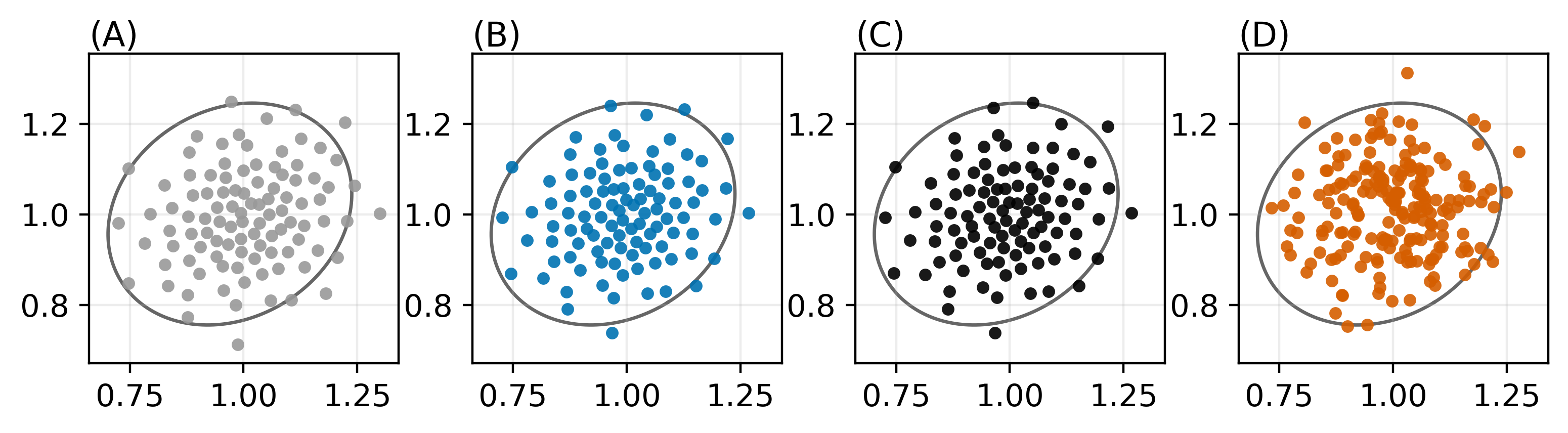}
\caption{Representative posterior supports under 30\% corrupted subset posteriors with \(K=10\) and \(m=100\). Panels show (A) barycenter, (B) Direct median, (C) Nested-tight median, and (D) medoid. The gray ellipse is the analytic posterior 95\% contour.}
\label{fig:exp-posterior-supports}
\end{figure}

The second application computes robust image prototypes from point-cloud representations of handwritten digits. We consider three target/outlier pairs: digit 0 contaminated by digit 6, digit 1 contaminated by digit 7, and digit 8 contaminated by digit 3. Images are converted to point clouds by retaining foreground pixel coordinates. We compare Direct, Nested-tight, Medoid, and Wasserstein barycenter as the outlier fraction increases.

Direct and Nested-tight are nearly indistinguishable across all three digit pairs. Both are slightly more stable than the barycenter, and both are substantially closer to the clean reference than the medoid. Averaged over the full image grid, Direct has distance 0.093 to the clean reference, compared with 0.093 for Nested-tight, 0.098 for barycenter, and 0.151 for medoid. Direct is also faster: 0.177 seconds on average versus 0.277 seconds for Nested-tight. At 40\% contamination, the average total outlier weight is about 0.366 for Direct, below the nominal contamination level.

Figure~\ref{fig:exp-image-prototypes} displays representative prototypes at the largest contamination level. Rows correspond to the target/outlier pairs 0/6, 1/7, and 8/3; columns show the clean reference, barycenter, Direct median, Nested-tight median, and medoid. The direct and nested median prototypes are visually similar, while the medoid remains constrained to an input-restricted representative. The clean-reference distance curves and outlier-weight curves are included in the supplementary material.

\begin{figure}[t]
\centering
\includegraphics[width=.92\textwidth]{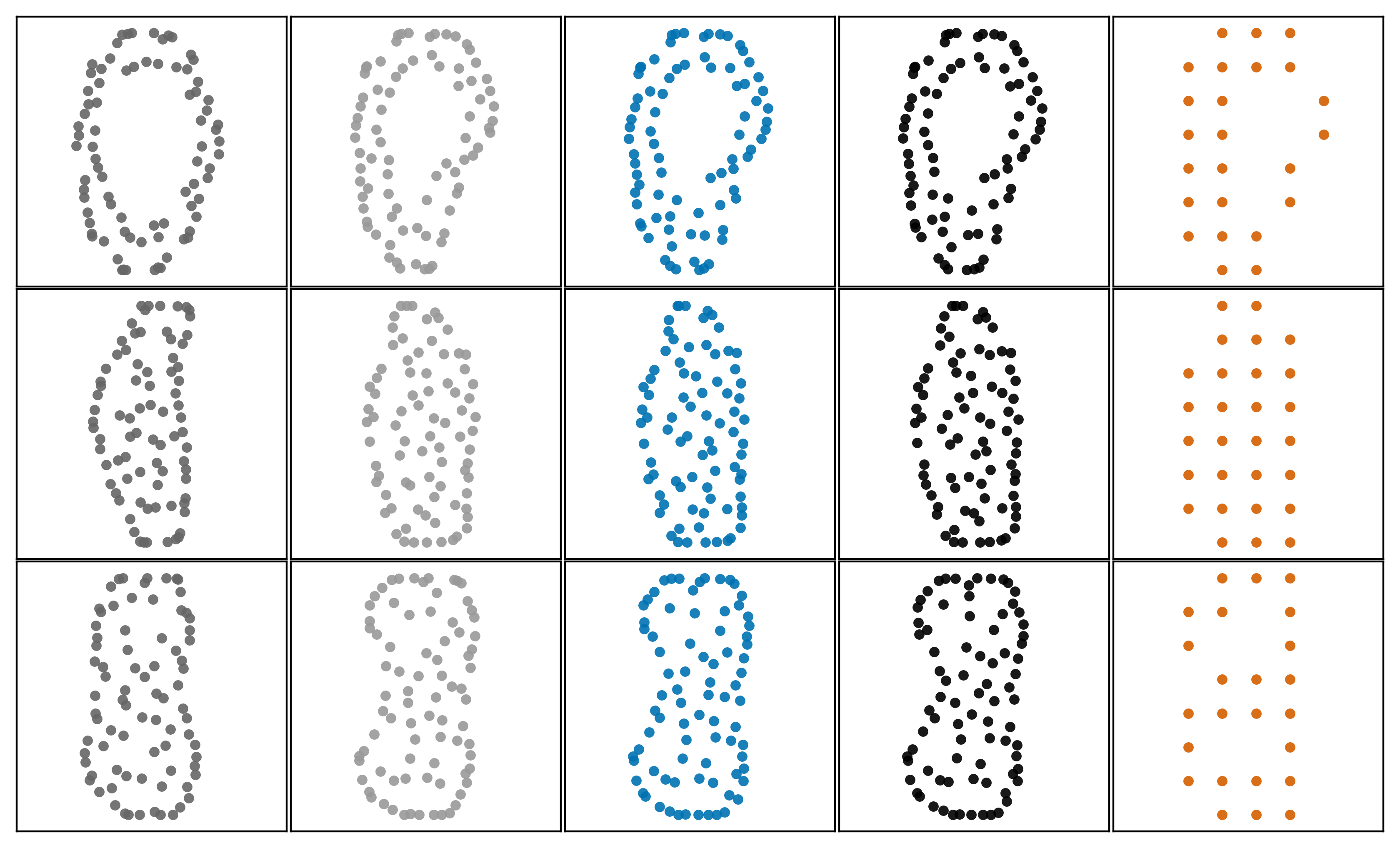}
\caption{Image prototypes at 40\% contamination and support size \(m=80\). Rows correspond to target/outlier digit pairs 0/6, 1/7, and 8/3. Columns show clean reference, barycenter, Direct median, Nested-tight median, and medoid.}
\label{fig:exp-image-prototypes}
\end{figure}

\subsection{Summary of empirical findings}\label{sec:exp-summary}

The experiments support the computational motivation of the direct solver. In the synthetic base benchmark, the direct method reaches a median objective very close to Nested-tight while using about 39\% of its exact OT subproblems and about 37\% of its runtime. The inner-burden experiment shows that increasing the accuracy of the nested inner barycenter solve improves the final objective but rapidly increases exact OT effort. The smoothing experiment shows that small positive smoothing is numerically stable and that the nonsmoothed iteration is well defined in the tested random instances. In the contaminated Gaussian, posterior aggregation, and image prototype experiments, Direct and Nested-tight produce nearly identical median summaries, while Direct consistently uses fewer exact OT subproblems. The barycenter comparisons confirm the expected mean-like sensitivity to outlying distributions, and the final Weiszfeld weights provide an interpretable diagnostic of how the median downweights contaminated measures.

\section{Discussion}\label{sec:discussion}

This paper develops a direct fixed-weight free-support solver for empirical Wasserstein medians. The Wasserstein median itself is a known robust Fr\'echet functional; the contribution here is computational. The standard metric-space Weiszfeld approach updates the current candidate by solving a weighted Wasserstein barycenter problem, so each outer median iteration contains an inner free-support barycenter solve. The proposed method removes that inner loop. It uses the exact OT plans already needed to evaluate the current distances, forms barycentric projections, and performs one closed-form Wasserstein--Weiszfeld relocation.

The main theoretical point is that this relocation is not merely an early-stopped barycenter iteration. For the smoothed objective \(\calM_{m,\eps}\), freezing the selected exact OT plans and linearizing the square-root distance produce a tight quadratic MM surrogate, whose minimizer is exactly the proposed update. This gives monotone descent for any selected exact optimal plans, even when the transport LP is degenerate and optimal plans are non-unique. The theory also explains the natural residual diagnostic, shows that the iterates remain in the convex hull of the input supports, and gives a finite-time best-residual rate. The residual-to-gradient bound and conditional stationarity result clarify when the fixed-point residual can be interpreted as ordinary first-order stationarity.

The experiments support the intended computational role of the method. In the synthetic benchmark, the direct solver reaches a median objective close to a tightly solved nested method while using fewer exact OT subproblems. The inner-burden experiment shows why this matters: improving the inner barycenter solve in the nested method can reduce the median objective, but it rapidly increases exact OT effort. In the contaminated Gaussian, posterior aggregation, and image-prototype experiments, the direct solver produces median summaries nearly indistinguishable from the nested exact solver, while the barycenter comparisons show the expected sensitivity of mean-like summaries to outlying distributions. The final Weiszfeld weights are also useful diagnostics, because they reveal which input measures are downweighted by the median computation.

The method has several limitations. First, the solver optimizes only support locations with fixed atom weights. This fixed-weight formulation is useful computationally and is dense in the large-\(m\) limit, but it is not the full finite-\(m\) free-support problem over both weights and locations. A natural extension is a block-coordinate method that alternates between support relocation and weight updates. Second, the support-location problem is nonconvex, so initialization can affect the final local solution. The descent theory guarantees objective decrease and fixed-point self-consistency, not global optimality at a fixed support size. Third, the method uses repeated exact OT solves. This keeps the comparison with nested exact Weiszfeld clean, but it also limits scalability when the number of measures or support sizes become large. Approximate OT variants would be useful in larger problems, provided their errors are controlled within an inexact-MM framework.

Several directions follow from these limitations. On the algorithmic side, adaptive support weights, deterministic tie-breaking for degenerate transport plans, and stochastic or mini-batch variants would broaden the range of usable problem sizes. On the theoretical side, sharper local convergence results could be obtained under explicit non-degeneracy or \KLoja conditions, while statistical-computing applications would benefit from finite-sample error decompositions separating sampling error, support-resolution error, smoothing error, and optimization error. Finally, robust posterior aggregation and distributional clustering provide natural settings in which Wasserstein medians may be preferable to barycenters, but the practical choice between them should be guided by the observed effective weights, objective histories, and downstream performance.








\begin{appendices}

\section{Proofs of theoretical results}\label{app:proof}
\subsection{Proof of Proposition~\ref{prop:majorization} }
\begin{proof}
For each \(n\), \(\Gamma^{(n,k)}\) is feasible for every support configuration because \(\Pi(v,w_n)\) depends only on the weights. Hence
\[
    W_2^2(\barmu_Z,\mu_n)\le Q_{n,k}(Z),
\]
with equality at \(Z^{(k)}\). The function \(s\mapsto\sqrt{s+\eps^2}\) is increasing and concave on \([0,\infty)\). Therefore
\[
\sqrt{W_2^2(\barmu_Z,\mu_n)+\eps^2}
\le
\sqrt{Q_{n,k}(Z)+\eps^2}
\le
r_{n,\eps}^{(k)}+\frac{Q_{n,k}(Z)-c_n^{(k)}}{2r_{n,\eps}^{(k)}}.
\]
Multiplying by \(\pi_n\) and summing gives the result. Equality at \(Z^{(k)}\) follows because both inequalities are tight there.
\end{proof}

\subsection{Proof of Proposition~\ref{prop:minimizer}}
\begin{proof}
Expanding \(Q_{n,k}\),
\[
Q_{n,k}(Z)=
\sum_i v_i\norm{z_i}^2
-2\sum_i\left\langle z_i,\sum_j\Gamma_{ij}^{(n,k)}x_{n,j}\right\rangle
+C_{n,k},
\]
where \(C_{n,k}\) is independent of \(Z\). The \(Z\)-dependent part of \(S_{k,\eps}\) is
\[
A_k\sum_i v_i\norm{z_i}^2
-
2\sum_i\left\langle z_i,
\sum_n\frac{\pi_n}{2r_{n,\eps}^{(k)}}\sum_j\Gamma_{ij}^{(n,k)}x_{n,j}
\right\rangle.
\]
Since \(\lambda_{n,\eps}^{(k)}=(\pi_n/(2r_{n,\eps}^{(k)}))/A_k\), the linear coefficient equals \(A_k v_iM_i^{(k)}\). Completing the square gives \eqref{eq:square}. Since \(A_k>0\) and \(v_i>0\), the minimizer is unique.
\end{proof}

\subsection{Proof of Theorem~\ref{thm:descent}}
\begin{proof}
By Proposition~\ref{prop:majorization},
\[
\calM_{m,\eps}(Z^{(k+1)})\le S_{k,\eps}(Z^{(k+1)}).
\]
By Proposition~\ref{prop:minimizer},
\[
S_{k,\eps}(Z^{(k+1)})-S_{k,\eps}(Z^{(k)})
=
-A_k\sum_i v_i\norm{z_i^{(k)}-M_i^{(k)}}^2.
\]
Since \(S_{k,\eps}(Z^{(k)})=\calM_{m,\eps}(Z^{(k)})\), the descent inequality follows. Nonnegativity of the objective gives convergence of the objective values.
\end{proof}

\subsection{Proof of Lemma~\ref{lem:hull}}
\begin{proof}
For each \(n\) and \(i\), the coefficients \(\Gamma_{ij}^{(n,k)}/v_i\) are nonnegative and sum to one, so \(B_n^{(k)}(z_i^{(k)})\in\conv\{x_{n,j}:j=1,\ldots,m_n\}\subset K_X\). The weights \(\lambda_{n,\eps}^{(k)}\) are nonnegative and sum to one. Hence \(M_i^{(k)}\in K_X\). Since the full update sets \(z_i^{(k+1)}=M_i^{(k)}\), the claim follows.
\end{proof}

\subsection{Proof of Corollary~\ref{cor:residual}}
\begin{proof}
By Lemma~\ref{lem:hull}, for all \(k\ge1\), \(Z^{(k)}\in K_X^m\). Thus every distance \(W_2(\barmu_{Z^{(k)}},\mu_n)\) is at most \(D_X\), and \(r_{n,\eps}^{(k)}\le\bar D_\eps\). Therefore
\[
A_k=\frac12\sum_n\frac{\pi_n}{r_{n,\eps}^{(k)}}\ge\frac{1}{2\bar D_\eps}.
\]
Summing Theorem~\ref{thm:descent} from \(k=1\) to \(K\) gives
\[
\sum_{k=1}^{K} A_kR_k
\le
\calM_{m,\eps}(Z^{(1)})-\calM_{m,\eps}(Z^{(K+1)}).
\]
Letting \(K\to\infty\) shows \(\sum_k R_k<\infty\), hence \(R_k\to0\). The finite-time bound follows from the same inequality and the lower bound on \(A_k\).
\end{proof}

\subsection{Proof of Corollary~\ref{cor:gradient-residual}}
\begin{proof}
For each differentiable OT value, Danskin's theorem gives the same gradient for every selected optimal plan:
\[
\nabla_{z_i}W_2^2(\barmu_{Z^{(k)}},\mu_n)
=
2v_i\{z_i^{(k)}-B_n^{(k)}(z_i^{(k)})\}.
\]
By the chain rule,
\[
\nabla_{z_i}\calM_{m,\eps}(Z^{(k)})
=
 v_i \sigma_k\left(z_i^{(k)}-M_i^{(k)}\right),
\]
because \(M_i^{(k)}\) is the \(\lambda_{n,\eps}^{(k)}\)-weighted average of the barycentric projections and \(\lambda_{n,\eps}^{(k)}=(\pi_n/r_{n,\eps}^{(k)})/\sigma_k\). Therefore
\[
\norm{\nabla \calM_{m,\eps}(Z^{(k)})}^2
= \sigma_k^2\sum_{i=1}^m v_i^2\norm{z_i^{(k)}-M_i^{(k)}}^2
\le \sigma_k^2 v_{\max} R_k.
\]
Since \(r_{n,\eps}^{(k)}\ge \eps\) and \(\sum_n\pi_n=1\), we also have \(\sigma_k\le1/\eps\). This proves the stated bound.
\end{proof}

\subsection{Proof of Corollary~\ref{cor:local-rate}}
\begin{proof}
Under the local active-plan condition, each squared OT value is represented in \(U\) by a fixed quadratic polynomial in \(Z\). Hence \(\calM_{m,\eps}\) is twice continuously differentiable in a possibly smaller neighborhood of \(Z^\star\). By positive definiteness of the Hessian at \(Z^\star\), after shrinking the neighborhood if necessary there is a constant \(\ell_H>0\) such that \(\nabla^2\calM_{m,\eps}(Z)\succeq \ell_H I\) throughout the neighborhood. Thus the standard local Polyak--\L{}ojasiewicz inequality holds there:
\[
    \calM_{m,\eps}(Z)-\calM_{m,\eps}(Z^\star)
    \le \frac{1}{2\ell_H}\norm{\nabla\calM_{m,\eps}(Z)}^2.
\]
Let \(\Sigma_U=\sup_{Z\in U}\sum_n\pi_n/r_{n,\eps}(Z)\), which is finite because \(\eps>0\), and let \(v_{\max}=\max_i v_i\). Corollary~\ref{cor:gradient-residual} gives
\[
    \calM_{m,\eps}(Z)-\calM_{m,\eps}(Z^\star)
    \le \kappa R(Z),
    \qquad
    \kappa=\frac{\Sigma_U^2 v_{\max}}{2\ell_H},
\]
where \(R(Z)=\sum_i v_i\|z_i-M_i(Z)\|^2\). On the same compact neighborhood, \(A(Z)=\frac12\sum_n\pi_n/r_{n,\eps}(Z)\) has a positive lower bound \(\underline A\). Enlarging \(\kappa\) if necessary so that \(\kappa>\underline A\), the descent inequality gives, for all sufficiently large \(k\),
\[
\calM_{m,\eps}(Z^{(k+1)})-\calM_{m,\eps}(Z^\star)
\le
\left(1-\frac{\underline A}{\kappa}\right)
\{\calM_{m,\eps}(Z^{(k)})-\calM_{m,\eps}(Z^\star)\}.
\]
Iterating the contraction proves the claim.
\end{proof}

\subsection{Proof of Proposition~\ref{prop:limit}}
\begin{proof}
Let \(Z^{(k_\ell)}\to Z^\infty\). The transport polytopes are compact, so after passing to a subsequence we may assume \(\Gamma^{(n,k_\ell)}\to\Gamma^{(n,\infty)}\) for each \(n\). Optimality passes to the limit because the transport cost is continuous in \(Z\). Therefore \(\Gamma^{(n,\infty)}\) is optimal at \(Z^\infty\). Since \(\eps>0\), the weights \(\lambda_{n,\eps}^{(k_\ell)}\) converge to \(\lambda_{n,\eps}^\infty\). Finally, \(R_{k_\ell}\to0\) implies \(z_i^{(k_\ell)}-M_i^{(k_\ell)}\to0\). Passing to the limit in \eqref{eq:destination} gives \eqref{eq:limit-fp}.
\end{proof}

\subsection{Proof of Corollary~\ref{cor:stationarity}}
\begin{proof}
For fixed \(n\), the discrete OT value is the minimum over a fixed compact polytope of a function smooth in \(Z\). By Danskin's theorem, differentiability at \(Z^\infty\) implies that every optimal plan induces the same gradient. Hence using \(\Gamma^{(n,\infty)}\),
\[
\nabla_{z_i}W_2^2(\barmu_Z,\mu_n)\big|_{Z=Z^\infty}
=
2v_i\left(z_i^\infty-\frac{1}{v_i}\sum_j\Gamma_{ij}^{(n,\infty)}x_{n,j}\right).
\]
The chain rule gives
\[
\nabla_{z_i}\calM_{m,\eps}(Z^\infty)
=
 v_i\sum_n\frac{\pi_n}{r_{n,\eps}^\infty}
\left(z_i^\infty-\frac{1}{v_i}\sum_j\Gamma_{ij}^{(n,\infty)}x_{n,j}\right).
\]
Factoring out \(\sum_n\pi_n/r_{n,\eps}^\infty\), the remaining term is the difference between \(z_i^\infty\) and the right side of the fixed-point equation in Proposition~\ref{prop:limit}. Thus the gradient is zero.
\end{proof}

\subsection{Proof of Proposition~\ref{prop:smoothing}}
\begin{proof}
For \(d\ge0\),
\[
0\le \sqrt{d^2+\eps^2}-d\le\eps.
\]
The bound \eqref{eq:smoothing-bound} follows by applying this inequality to each term and summing with weights \(\pi_n\). Let \(m_\eps^{\mathcal D}=\inf_{Z\in\mathcal D}\calM_{m,\eps}(Z)\) and \(m^{\mathcal D}=\inf_{Z\in\mathcal D}\calM_m(Z)\). The uniform bound gives \(m_\eps^{\mathcal D}\le m^{\mathcal D}+\eps\), while \(\calM_m\le\calM_{m,\eps}\). If \(Z_\ell\) is \(\delta_\ell\)-near-minimizing for \(\calM_{m,\eps_\ell}\) over \(\mathcal D\), then
\[
\calM_m(Z_\ell)\le m^{\mathcal D}+\eps_\ell+\delta_\ell.
\]
Continuity of \(\calM_m\) on compact \(\mathcal D\) implies that every accumulation point minimizes \(\calM_m\) over \(\mathcal D\).
\end{proof}

\subsection{Proof of Corollary~\ref{cor:fixed-eps}}
\begin{proof}
Let \(Z_m\) be any exact or approximate minimizer of \(\calM_m\) over \(\mathcal D\). By Proposition~\ref{prop:smoothing},
\[
\inf_{Z\in\mathcal D}\calM_{m,\eps}(Z)
\le
\inf_{Z\in\mathcal D}\calM_m(Z)+\eps.
\]
If \(\hat Z\) is \(\delta\)-near-minimizing for \(\calM_{m,\eps}\), then
\[
\calM_m(\hat Z)
\le
\calM_{m,\eps}(\hat Z)
\le
\inf_{Z\in\mathcal D}\calM_{m,\eps}(Z)+\delta
\le
\inf_{Z\in\mathcal D}\calM_m(Z)+\eps+\delta,
\]
which is the claim.
\end{proof}

\subsection{Proof of Proposition~\ref{prop:stability}}
\begin{proof}
The reverse triangle inequality for the metric \(W_2\) gives
\[
\left|W_2(\zeta,\mu_n)-W_2(\zeta,\nu_n)\right|\le W_2(\mu_n,\nu_n).
\]
Multiplying by \(\pi_n\) and summing yields the result.
\end{proof}

\subsection{Proof of Proposition~\ref{prop:uniform-density}}
\begin{proof}
Fix \(\mu\in\calP(K)\). Since finitely supported probability measures are dense in \((\calP(K),W_2)\), choose
\[
    \eta=\sum_{\ell=1}^{L} a_\ell\delta_{y_\ell}
\]
with \(y_\ell\in K\), \(a_\ell>0\), \(\sum_\ell a_\ell=1\), and \(W_2(\eta,\mu)\) arbitrarily small. For each large \(m\), choose nonnegative integers \(k_{\ell,m}\) summing to \(m\) such that \(k_{\ell,m}/m\to a_\ell\). Define the uniform empirical measure
\[
    \eta_m=\frac1m\sum_{\ell=1}^{L}\sum_{r=1}^{k_{\ell,m}}\delta_{y_\ell}.
\]
After merging repeated atoms, \(\eta_m\) is still represented by \(m\) equally weighted particles, and \(\eta_m\to\eta\) weakly. Since the support is compact, this convergence is also in \(W_2\). The triangle inequality gives
\[
    W_2(\eta_m,\mu)\le W_2(\eta_m,\eta)+W_2(\eta,\mu).
\]
First take \(m\to\infty\), and then choose \(\eta\) closer to \(\mu\). This proves the density of uniform empirical measures.
\end{proof}

\subsection{Proof of Theorem~\ref{thm:resolution}}
\begin{proof}
The functional \(\calM\) is continuous on \(\calP(K)\) under \(W_2\), and \(\calP(K)\) is compact. Let \(m^*=\inf_{\mu\in\calP(K)}\calM(\mu)\). Since \(\calA_m(K)\subset\calP(K)\), \(\beta_m\ge m^*\). By density, for any \(\mu^\star\in\Med\) there exists \(\nu_m\in\calA_m(K)\) with \(W_2(\nu_m,\mu^\star)\to0\). Continuity gives \(\calM(\nu_m)\to m^*\), hence \(\limsup\beta_m\le m^*\). Thus \(\beta_m\to m^*\).

If \(\hat\mu_m\) are near-minimizers, compactness gives convergent subsequences. Along any subsequence \(\hat\mu_{m_\ell}\to\eta\), continuity gives \(\calM(\eta)=m^*\), so \(\eta\in\Med\). If \(\dist_{W_2}(\hat\mu_m,\Med)\) did not converge to zero, a subsequence bounded away from \(\Med\) would have a further convergent subsequence with limit in \(\Med\), a contradiction. The unique-median statement follows immediately.
\end{proof}

\end{appendices}

\bibliography{references}

\end{document}